\documentclass[11pt]{article}

\usepackage[utf8]{inputenc}
\pdfoutput=1
\usepackage[margin=1.1in,letterpaper]{geometry}

\usepackage{bm}

\linespread{1.025}              % for Palatino

\makeatletter
% Title stuff, borrowed in part from aaai92.sty
\newlength\aftertitskip     \newlength\beforetitskip
\newlength\interauthorskip  \newlength\aftermaketitskip

%% Changeable parameters.
\setlength\aftertitskip{0.1in plus 0.2in minus 0.2in}
\setlength\beforetitskip{0.05in plus 0.08in minus 0.08in}
\setlength\interauthorskip{0.08in plus 0.1in minus 0.1in}
\setlength\aftermaketitskip{0.3in plus 0.1in minus 0.1in}

%% overall definition of maketitle, @maketitle does the real work
% \def\maketitle{\par
	%  \begingroup
	%    \def\thefootnote{\fnsymbol{footnote}}
	%    \def\@makefnmark{\hbox to 4pt{$^{\@thefnmark}$\hss}}
	%    \@maketitle \@thanks
	%  \endgroup
	% \setcounter{footnote}{0}
	%  \let\maketitle\relax \let\@maketitle\relax
	%  \gdef\@thanks{}\gdef\@author{}\gdef\@title{}\let\thanks\relax}

% \def\@startauthor{\noindent \normalsize\bf}
% \def\@endauthor{}
% \def\@starteditor{\noindent \small {\bf Editor:~}}
% \def\@endeditor{\normalsize}
% \def\@maketitle{\vbox{\hsize\textwidth
		%  \linewidth\hsize \vskip \beforetitskip
		%  {\begin{center} \LARGE\@title \par \end{center}} \vskip \aftertitskip
		%  {\def\and{\unskip\enspace{\rm and}\enspace}%
			%   \def\addr{\small\it}%
			%   \def\email{\hfill\small\tt}%
			%   \def\name{\normalsize\bf}%
			%   \def\AND{\@endauthor\rm\hss \vskip \interauthorskip \@startauthor}
			%   \@startauthor \@author \@endauthor}
		% }}

\makeatother

\pdfoutput=1                    % force pdflatex to run

\geometry{verbose}
\usepackage{color}
\usepackage{mathtools}
\usepackage{amsmath}
\usepackage{amsthm}
\usepackage{amssymb}
\usepackage{mathdots}
\usepackage{graphicx}
\usepackage[unicode=true,pdfusetitle,
bookmarks=true,bookmarksnumbered=false,bookmarksopen=false,
breaklinks=false,pdfborder={0 0 1},backref=false,colorlinks=true]
{hyperref}
\hypersetup{
	urlcolor=blue,citecolor=black}

\makeatletter

\newtheorem{theorem}{Theorem}[section]
\newtheorem{lemma}[theorem]{Lemma}
\newtheorem{corollary}[theorem]{Corollary}
\newtheorem{proposition}[theorem]{Proposition}

\theoremstyle{definition}
\newtheorem{definition}[theorem]{Definition}
\newtheorem{remark}[theorem]{Remark}

\numberwithin{equation}{section}

%%%%%%%%%%%%%%%%%%%%%%%%%%%%%%%%

\title{Balance with Memory in Signed Networks via Mittag-Leffler Matrix Functions}
% Authors: full names plus addresses.
\author{Yu Tian
	\thanks{Nordita, Stockholm University and KTH Royal Institute of Technology, SE-106 91 Stockholm, Sweden (yu.tian@su.se).}
	\and Ernesto Estrada
	\thanks{Institute of Cross-Disciplinary Physics and Complex Systems, IFISC (UIB-CSIC), Palma de Mallorca, 07122, Spain (estrada@ifisc.uib-csic.es).}
}
\date{}

\begin{document}
	\maketitle

	\begin{abstract}
		Structural balance is an important characteristic of graphs/networks where edges can be positive or negative, with direct impact on the study of real-world complex systems. When a network is not structurally balanced, it is important to know how much balance still exists in it. Although several measures have been proposed to characterize the degree of balance, the use of matrix functions of the signed adjacency matrix emerges as a very promising area of research. 
		Here, we take a step forward to using Mittag-Leffler (ML) matrix functions to quantify the notion of balance of signed networks. We show that the ML balance index can be obtained from first principles on the basis of a nonconservative diffusion dynamic, and that it accounts for the memory of the system about the past, by diminishing the penalization that long cycles typically receive in other matrix functions. 
		Finally, we demonstrate the important information in the ML balance index with both artificial signed networks and real-world networks in various contexts, ranging from biological and ecological to social ones.
	\end{abstract}
	
	% % REQUIRED
	% \begin{keywords}
		% Structural Balance, Signed Adjacency Matrix, Mittag-Leffler Matrix Functions
		% \end{keywords}
	
	% % REQUIRED
	% \begin{MSCcodes}
		% 05C22, 05C38, 05C50, 37E25, 91D30, 94C15
		% \end{MSCcodes}
	% 05C22 - signed and weighted graphs
	% 05C38 - Paths and cycles
	% 05C50 - graphs and linear algebra
	% 05C81 - random walks on graphs
	% 37E25 - dynamical systems involving maps of trees and graphs
	% 39A06 - linear difference equations
	% 82B20 - lattice systems (Ising, dimer, Potts, etc.) and systems on graphs arising in equilibrium statistical mechanics
	% 82C20 - dynamic lattice systems (kinetic Ising, etc.) and systems on graphs in time-dependent statistical mechanics
	% 91D30 - social networks; opinion dynamics
	% 92E10 - molecular structure (graph-theoretic methods, methods of differential topology, etc.)
	% 94C15 - applications of graph theory to circuits and networks
	
	\section{Introduction}
	The use of matrix functions \cite{higham2008functions} has represented a significant advance in the development of mathematical models of networks $G=\left(V,E\right)$ in the last 20 years \cite{benzi2020matrix,estrada2010network}. In particular, the use of functions of the adjacency matrix $A$ of a network, $f\left(A\right)$, has impacted the areas of study of vertex centrality measures \cite{benzi2020matrix} as well as our understanding of the navigability of networks \cite{estrada2023network,seguin2018navigation}. 
	The mathematical roots of these developments come from the fact that $\left(A^{k}\right)_{uv}$ counts the number of walks of length $k$ connecting the vertices $u,v\in V$, where a walk is a sequence of (not necessarily different) consecutive vertices and edges in the network (see \cite{berge2001theory,festinger1949analysis,katz1953new} for original sources). Therefore, defining matrix functions of the type $f\left(A\right)=\sum_{k=0}^{\infty}c_{k}A^{k}$ allows to quantify the ``importance'', or centrality, of a vertex $u\in V$, by taking $\left(f\left(A\right)\right)_{uu}$ as a counting of all self-returning walks starting at vertex $u$, and giving more weight to the smaller than to the longer ones through the constants $\{c_k\}$ \cite{estrada2005subgraph}. Similarly, the term $\left(f\left(A\right)\right)_{uv}$ accounts for the ``communicability'' capacity between the vertices \cite{estrada2008communicability}. 
	Building on the field of Euclidean matrix theory \cite{balaji2007euclidean,gower1985properties,krislock2012euclidean}, circum-Euclidean distances \cite{alfakih2018euclidean,tarazaga1996circum}, a.k.a, spherical Euclidean distance, between pairs of vertices can also be obtained by defining $\left(f\left(A\right)\right)_{uu}+\left(f\left(A\right)\right)_{vv}-\left(f\left(A\right)\right)_{uv}$ for positive-definite matrix functions $f\left(A\right)$ \cite{estrada2012communicability} and angles $\left(f\left(A\right)\right)_{uv}/\sqrt{\left(f\left(A\right)\right)_{uu}\left(f\left(A\right)\right)_{vv}}$ \cite{estrada2016communicability} (see also \cite{estrada2014hyperspherical}).
	
	The historical background for the use of matrix functions to study networks can be traced back to the work of Katz \cite{katz1953new} who proposed $\left(I-\varepsilon A\right)^{-1}$, with $0<\varepsilon<\left(\lambda_{1}\left(A\right)\right)^{-1}$ where $\lambda_{1}\left(A\right)$ is the spectral radius of $A$, to define a vertex centrality index, nowadays known as Katz centrality.
	However, the resolvent of the adjacency matrix $\left(I-\varepsilon A\right)^{-1}$ is parametric, where the parameter $\varepsilon$ is upper-bounded by the reciprocal of $\lambda_{1}\left(A\right)$.
	Then when $\lambda_{1}\left(A\right)$ is significantly large, most of the information of the network structure stored in the $A$ matrix is making almost no contribution. This has been recently shown in examples where $\lambda_{1}\left(A\right)\gg 1$, and the resolvent of $A$ does not provide reasonable results \cite{estrada2024communicability}.
	Hence, the definitions of subgraph centrality $\left(e^{A}\right)_{uu}$ \cite{estrada2005subgraph} and communicability $\left(e^{A}\right)_{uv}$ \cite{estrada2008communicability} have triggered much recent interest.
	Another advance of the use of matrix exponential is its interpretation and derivation in different contexts, ranging from coupled quantum harmonic oscillators \cite{estrada2012physics} and compartmental epidemiological models \cite{lee2019transient}, to nonconservative
	diffusion \cite{estrada2024conservative}. 
	Last but not least, we can think of $e^{A}=\sum_{k=0}^{\infty}\left(k!\right)^{-1}A^{k}$ by replacing the factorial for its more general definition based on Euler Gamma functions: $E_{\alpha,\beta}\left(A\right)=\sum_{k=0}^{\infty}\left(\varGamma\left(\alpha k+\beta\right)\right)^{-1}A^{k}$, $\alpha,\beta>0$. It retrieves the exponential when $\alpha=1$ and $\beta=1$, but in general represents the Mittag-Leffler matrix functions of $A$. The idea of using $E_{\alpha,\beta}\left(A\right)$ to define centrality and communicability indices was previously developed independently
	by Arrigo and Durastante \cite{arrigo2021mittag} and by Estrada \cite{estrada2022many}.
	
	In this work, we take a step forward to using Mittag-Leffler matrix functions to quantify the degree of balance of signed graphs. 
	A signed graph $G_s$ can have both positive and negative edges
	% $G_{s}=\left(G,\Sigma,\rho\right)$ is a graph $G$ to which we assign signs $\Sigma=\left\{ -1,+1\right\} $ to the edges, $\rho:E\rightarrow\Sigma$ 
	\cite{zaslavsky1982signed}. 
	The signs of the edges emerge in various real-world scenarios. For instance, positive signs may represent friendship, collaboration, alliances, etc., while negative ones may represent enemity, hostility, conflicts, etc. in social networks \cite{Altafini_2012_opinion,isakov2019structure}. In voting systems, they may represent whether two voters support the same or different candidates ; in recommendation systems, they can correspond to whether two users recommend the same product, or they have discrepant opinions about the same product. In transcriptional networks, edges represent the action of a transcription factor on one of its target genes, and the sign means activation ($+$) or inhibition ($-$) \cite{soranzo2012decompositions}. Cooperation and competition between species in ecological networks \cite{saiz2017evidence} and between products in economic networks \cite{tian_2022_thesis,tian_2021_role} can also be assigned to positive and negative edges, respectively.
	
	% A signed network is said to be balanced if and only if its set of vertices can be split into two subsets $V=V_{1}\cup V_{2}$ such that all edges connecting pairs of vertices $u,v\in V_{i}$, $i=\left\{1,2\right\}$ are positive and all the existing edges between $u\in V_{1}\left(V_{2}\right)$ and $v\in V_{2}\left(V_{1}\right)$ are negative \cite{harary_1953_balance}. 
	The important notion of balance can be defined through the sign of cycles, which is the product of the signs of its edges \cite{cartwright1956structural,harary_1953_balance}. Specifically, a graph is balanced if and only if all its cycles are positive; otherwise it is unbalanced.
	If we focus on a signed triangle, it is balanced if either (i) all its edges are positive or (ii) two edges are positive and one is negative. The stability of the first triangle is self-evident, while in the second, the structure indicating that the ``enemy of my enemy is my friend'' provokes our feeling of stability by the formation of a coalition against the common enemy. 
	The all-negative triangle is clearly unbalanced, the same as the one with only one negative edge. In the latter case, there are clear tensions between the two vertices sharing the negative edge and the one with whom they share positive ones. Think about the tensions in a cycle of friends apart from one couple in conflict. We would expect that the tensions existing between the members decay as its length increases. 
	Hence, Estrada and Benzi has proposed the index $K\left(G_{s}\right)=tr\left[e^{A}\right]/tr\left[e^{\left|A\right|}\right]$ where $\left|A\right|$ is the entrywise absolute value of $A$, to quantify the degree of balance \cite{estrada2014walk}. In this way, $K\left(C_{3}^{-}\right)\approx0.686$, $K\left(C_{4}^{-}\right)\approx0.915$, and $K\left(C_{5}^{-}\right)\approx0.983$, where $C_k^-$ denotes the cycle of length $k$ and one negative edge. Further, $K\left(C_{10}^{-}\right)\approx0.99999947$, which is very close to balance (where $K=1$). Is it not the case that the factorial penalization used in the exponential is too heavy and fool us in this case? Here, by completing a close walk of length $10$, the information contained by the negative edge present in this cycle is almost completely forgotten.
	
	In this paper, we start by showing that the balance index $K\left(G_{s}\right)$ can be obtained from first principles on the basis of a nonconservative (NC) diffusion dynamic taking place on the graph $G_{s}$ relative to its underlying unsigned graph. Using this approach, we generalize the NC diffusion on graphs to a temporal-fractional model using Caputo fractional derivative. 
	In this way, we generalize the balance index $K(G_s)$ to indices based on Mittag-Leffler (ML) matrix functions of $A$. These new indices are derived from first-principles diffusion processes which are temporally non-local. Therefore, the ML balance index accounts for certain memory of the system about the past, by diminishing the penalization that long cycles typically receive. We illustrate our results with the use of some artificial signed graphs, as well as real-world networks representing gene transcription networks, ecological competition between plant species in vast regions of Spain, and social networks in rural villages in Honduras. 
	
	\section{Preliminaries}\label{sec:preliminary}
	Let us consider an undirected connected signed graph $G_s=\left(V,E,\rho\right)$ where $V=\{1,2,\dots, n\}$ is the vertex set, an edge $(i, j)\in E$ is an unordered pair of two distinct nodes in the set $V$, and $\rho: E\to\varSigma$, $\varSigma=\left\{ \pm1\right\}$, associates each edge with a sign. 
	Let $A\left(G\right)=A \in \mathbb{R}^{n\times n}$ be the adjacency matrix of $G$. Specifically, if there is no edge between nodes $i,j$, $A_{ij} = 0$; otherwise, $A_{ij}=\rho((i,j))$ denotes the edge sign. We will also consider the graph where we ignore the edge sign $\tilde{G}$, and the unsigned adjacency matrix $\left|A\right|$ where the absolute values are taken entrywise. 
	
	\subsection{Structural balance}
	A fundamental notion in the study of signed networks is the so-called \textit{structural balance} \cite{cartwright1956structural,heider_1946_psychology}. A signed graph is structurally balanced if and only if there is no cycle with an odd number of negative edges, which can be effectively defined through the following theorem.
	\begin{theorem}[structure theorem for balance \cite{harary_1953_balance}]
		A signed graph $G$ is structurally balanced if and only if there is a bipartition of the node set into $V=V_1\cup V_2$ with $V_1$ and $V_2$ being mutually disjoint and one of them being nonempty, s.t.~any edge between the two is negative while any edge within each node subset is positive. 
		\label{the:lit_signed-bal}
	\end{theorem}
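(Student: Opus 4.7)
The plan is to prove the two directions separately. The ``if'' direction is a short parity argument, while the ``only if'' direction requires an explicit construction of the bipartition; this is where the real work lies.

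For the sufficiency ($\Leftarrow$), suppose a bipartition $V = V_1 \cup V_2$ with the stated property exists. For any cycle $C$, I would traverse it once and observe that the number of edges of $C$ that cross between $V_1$ and $V_2$ must be even, because each such crossing toggles the current side and the walk is closed. By hypothesis, the crossing edges are exactly the negative edges of $C$, so every cycle has an even number of negative edges and $G$ is balanced.

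For the necessity ($\Rightarrow$), I would construct the bipartition explicitly. Since $G$ is connected by the standing assumption of Section~\ref{sec:preliminary}, I fix a root $v_0 \in V$ and a spanning tree $T$ of $G$. For each vertex $u$, define $\sigma(u) \in \{\pm 1\}$ to be the product of edge signs along the unique $v_0$-to-$u$ path in $T$, with $\sigma(v_0) = +1$. Set $V_1 = \{u : \sigma(u) = +1\}$ and $V_2 = \{u : \sigma(u) = -1\}$; since $v_0 \in V_1$, at least one part is nonempty. It then suffices to show that for every edge $(u,v) \in E$, $\rho(u,v) = +1$ if and only if $u$ and $v$ lie in the same part, i.e., $\rho(u,v) = \sigma(u)\sigma(v)$. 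For tree edges this is immediate from the recursive definition of $\sigma$. For a non-tree edge $(u,v) \in E \setminus T$, adjoining $(u,v)$ to $T$ produces a unique fundamental cycle $C_{uv}$; by the balance hypothesis its sign product equals $+1$. Expressing this product as $\rho(u,v)$ times the sign product along the $T$-path from $u$ to $v$, one then reads off $\rho(u,v) = \sigma(u)\sigma(v)$, as required.

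The main obstacle is this last bookkeeping step: identifying the sign product along the $T$-path from $u$ to $v$ with $\sigma(u)\sigma(v)$. The identification rests on the fact that the $T$-paths $v_0 \leadsto u$ and $v_0 \leadsto v$ share a common initial segment down to the least common ancestor of $u$ and $v$, whose contribution appears twice and hence squares to $+1$. This is elementary but is the single place where the balance hypothesis actually bites, so I would write it out carefully rather than leave it as ``clear''.
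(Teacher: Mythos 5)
Your argument is correct and complete. Note, however, that the paper itself offers no proof of this statement: it is quoted as a classical result of Harary \cite{harary_1953_balance}, so there is nothing in the text to compare your route against. What you have written is the standard spanning-tree proof of Harary's structure theorem. The sufficiency direction (crossings toggle the side, a closed traversal toggles an even number of times, crossings are exactly the negative edges) is exactly the usual parity argument. For necessity, your construction of $\sigma$ via sign products along tree paths, the verification $\rho(u,v)=\sigma(u)\sigma(v)$ for tree edges, and the use of the fundamental cycle together with the least-common-ancestor cancellation for non-tree edges are all sound; the cancellation step you flag is indeed the only place balance is used, and your treatment of it is right. Two small remarks: the connectedness hypothesis you invoke is available (Section~\ref{sec:preliminary} assumes $G_s$ connected), but the theorem as stated in the paper does not restrict to connected graphs, so for full generality you should apply your construction componentwise (which is routine); and the clause that one of $V_1,V_2$ be nonempty is automatic in your construction since $v_0\in V_1$, with $V_2$ possibly empty in the all-positive case, which matches the intended reading of the statement.
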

	There are several indices proposed to quantify the degree of balance, e.g., \cite{estrada2019rethinking,estrada2014walk,Facchetti_2011_large,giscard2017index,harary_1953_balance,kirkly2019index,singh2017index,tian2024sign}. One of the first measures based on the walk lengths was proposed by Estrada and Benzi \cite{estrada2014walk},
	\begin{equation}
		K\left(G\right)\coloneqq\dfrac{Tr\left(e^{A}\right)}{Tr\left(e^{\left|A\right|}\right)} = \dfrac{\sum_{j=1}^ne^{\lambda_j}}{\sum_{j=1}^n e^{\mu_j}},
		\label{equ:K(G)}
	\end{equation}
	where $\lambda_j$ and $\mu_j$ denote the eigenvalues of $A$ and $\left|A\right|$, respectively. 
	
	We now introduce \textit{switching equivalence}, which generalizes the idea of balance. 
	\begin{definition}
		The operation of reversing the signs of all edges connecting a subset $S\subseteq V$ and its complement is called switching the subset $S$. Two signed configurations $\rho, \rho': E\to \varSigma$ are said to be switching equivalent if there exists $S\subseteq V$ such that $\rho'$ can be obtained from $\rho$ by switching the subset $S$, denoted by $\rho \approx \rho'$. 
	\end{definition}
	Switching equivalence is an equivalence relation on sign configurations of a fixed underlying graph, and the corresponding equivalent classes are called switching classes. Clearly, balanced graphs comprise one switching class. It is also known that the spectra of signed graphs are switching invariant \cite{Atay_signedCheeger_2020,zaslavsky1982signed}.
	
	\subsection{Laplacians and Mittag-Leffler matrix function}
	We consider the signed Laplacian $L_{A}$ as
	\begin{equation}
		L_{A}\left(i,j\right)=\left\{ \begin{array}{cc}
			\sum_{\left(i,j\right)\in E}\left|A_{ij}\right| & i=j\\
			-A_{ij} & i\neq j.
		\end{array}\right.
	\end{equation}
	It governs the diffusion dynamics by Altafini's
	consensus model \cite{altafini2013biconsensus} that we will talk about in more detail later.  
	We also introduce the Lerman-Ghosh Laplacian \cite{ghosh2024non,lerman2012network}, 
	\begin{align}
		L_{\chi} = \chi I-A,
		\label{equ:laplacian-LM}
	\end{align}
	where $\chi\in\mathbb{R}^{+}$, and $I$ is the identity matrix. 
	The index we will propose in this paper is closely related to the dynamics governed by this Laplacian, and we will show that it shares an important property with the dynamics by the signed Laplacian. 
	Specifically, we will apply the time-fractional Caputo derivative,
	\begin{align}
		D_{t}^{\alpha}u(t)=\frac{1}{\Gamma(1-\alpha)}\int_{0}^{t}\frac{u'(\tau)}{(t-\tau)^{\alpha}}d\tau ,
		\label{equ:caputo}
	\end{align}
	where $u'(\tau)$ denotes the usual derivative. We assume that $u$ is differentiable and the convolution can be defined. Here, $0<\alpha\leq 1$, $0<t<\infty$, and $\varGamma\left(x\right)$ is the Euler gamma function. 
	We also recall that a diffusion process is said to be conservative if the number of diffusive particles is constant along the time; otherwise, it is called a nonconservative diffusion \cite{estrada2024conservative}.
	Finally, we introduce the building block of the balance index we will propose, the Mittag-Leffler (ML) function of a matrix, say $M$,
	\begin{equation}	   
		E_{\alpha}\left(M\right)\coloneqq\sum_{k=0}^{\infty}\dfrac{M^{k}}{\varGamma\left(\alpha k+1\right)}.
		\label{equ:ML}
	\end{equation}
	The study of these matrix functions for networks was previously studied by Arrigo and Durastante \cite{arrigo2021mittag}, and they also proposed to use $E_{\alpha}^\gamma \coloneqq E_{\alpha}\left(\gamma M\right)$ with $\gamma \le \Gamma(\alpha+1)$ accounting for fair contribution of walks in graphs. In our implementations, we adopt this suggestion, with $\gamma = \Gamma(\alpha+1)$.

	\section{Motivation}
	\label{sec:motivation}
	There are $2^{15}$ ways to put signs on the edges of the Petersen graph, on which only five (excluding the unsigned one) are essentially different \cite{zaslavsky2012six} (see Fig.~\ref{Petersen graphs}). We consider the diffusion dynamics by Altafini's consensus model \cite{altafini2013biconsensus}.
	Let $u\left(t\right)$ be the vector representing the state of the vertices in $G_s$ at time $t$, with $u\left(0\right)=u^{0}$, and let $\dot{u}\left(t\right)$ be the vector of their time derivatives.
	Then,
	\begin{equation}
		\dot{u}_{i}\left(t\right)=-\sum_{\left(i,j\right)\in E}\left|A_{ij}\right|\left(u_{i}\left(t\right)-\textnormal{sgn}\left(A_{ij}\right)u_{j}\left(t\right)\right),\label{eq:diffusion}
	\end{equation}
	where $\text{sgn}(\cdot)$ returns the sign of the value. Hence,
	\begin{equation}
		\dot{u}\left(t\right)=-L_{A}u\left(t\right);u\left(0\right)=u^{0}.\label{eq:diffusion-1}
	\end{equation}
	% where the entries of Altafini's Laplacian $L_{A}$ are
	% \begin{equation}
		% 	L_{A}\left(i,j\right)=\left\{ \begin{array}{cc}
			% 		\sum_{\left(i,j\right)\in E}\left|A_{ij}\right| & i=j\\
			% 		-A_{ij} & i\neq j.
			% 	\end{array}\right.
		% \end{equation}
	\begin{center}
		\begin{figure}[h]
			\begin{centering}
				\includegraphics[width=0.7\textwidth]{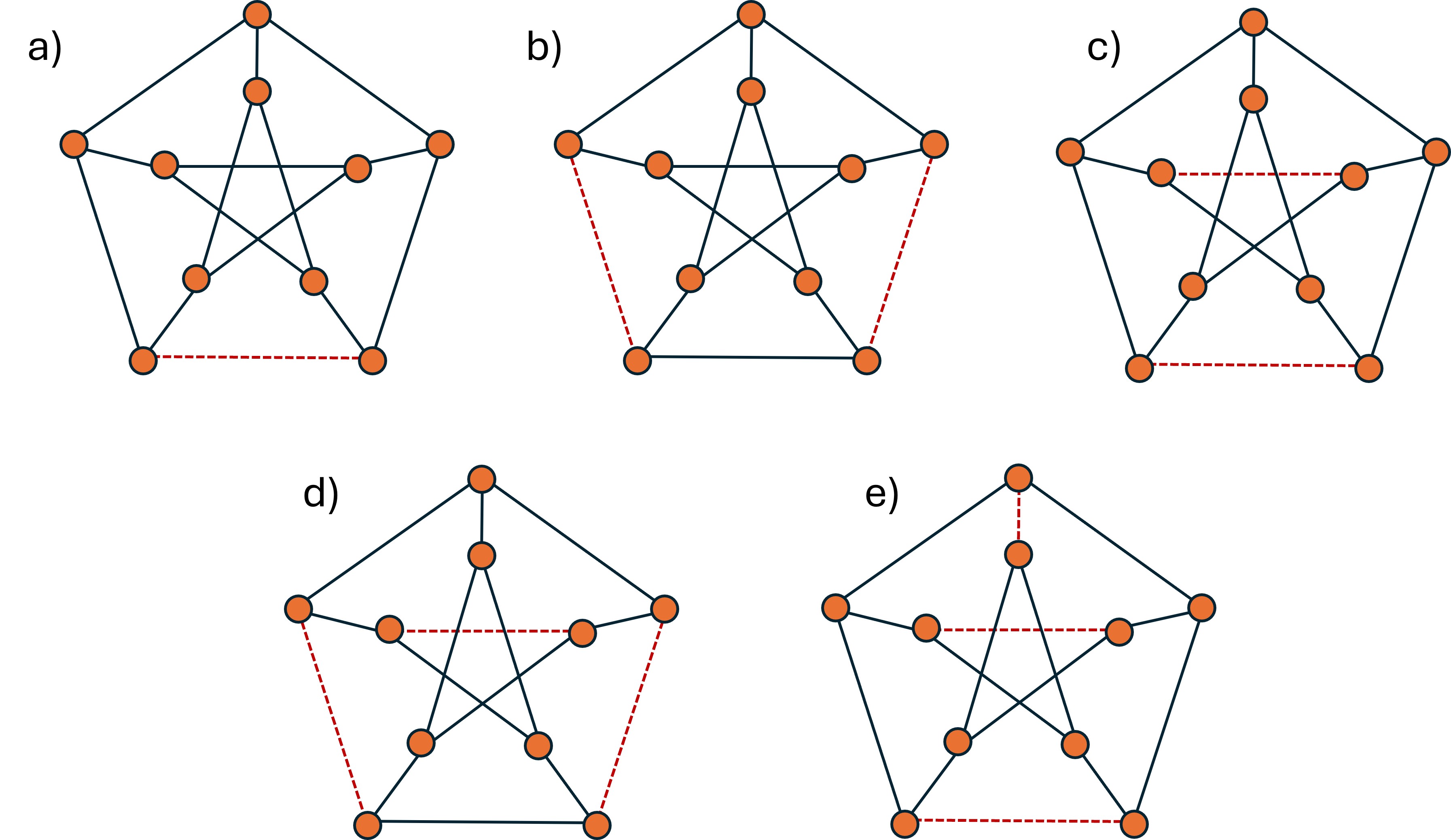} 
				\par\end{centering}
			\caption{The five switching isomorphism types of signed Petersen graph excluding the unsigned one, where solid lines represent positive edges and dashed lines represent negative ones.}
			\label{Petersen graphs} 
		\end{figure}
		\par\end{center}
	
	We consider the convergence time, $t_{c}$, at which the state values are sufficiently close to each other, with tolerance $10^{-5}$, i.e., $\left|u_{v}\left(t_{c}\right)-u_{u}\left(t_{c}\right)\right|<10^{-5}$, $\forall u,v\in G_s$. We note that the only difference of graphs in Fig.~\ref{Petersen graphs} lies in their sign patterns, and we denote a negative cycle of length $k$ by $C_{k}^{-}$. 
	We find that the graph having the most $C_{5}^{-}$, graph e), is the graph reaching the consensus in a fastest time, with 12 $C_{5}^{-}$ and $t_c=11$. The graph having the least $C_{5}^{-}$, graph a), is the one that delays the most, with 4 $C_{5}^{-}$ and $t_c=48$.
	It is known that consensus will never be reached if a graph is balanced, but instead the dynamics reaches a dissensus state. Therefore, graph a) is more similar to a balanced graph in the sense that it delays more to reach the consensus than graph e). 
	However, this simple arithmetic is broken when we consider that graphs b) and d), both with 6 $C_{5}^{-}$, but with the first almost doubling the time for consensus of the second ($24$ versus $14$). We can then extend the analysis to consider $C_{6}^{-}$, which clearly indicates that graph d) is less similar to a balanced graph than graph b), with $10$ versus $6$ $C_{6}^{-}$, respectively. Under this kind of semiquantitative analysis, a problem emerges when considering graph c) with $t_c=22$, which has $8$ $C_{5}^{-}$, more than that of graphs b) and d), but 4 $C_{6}^{-}$ less than that of the previous two graphs. We defer more details to Supplementary Material.
	% \yu{We will discuss it in more detail after introducing the following measures.}
	
	Since the Petersen graphs are cubic, $L_{A}\left(i,i\right)=3$, $\forall i\in V$, which allows us to write $L_{A}=L_{\chi}=\chi I-A$, where
	$L_{\chi}$ is the Lerman-Ghosh Laplacian \eqref{equ:laplacian-LM}
	as introduced in section \ref{sec:preliminary}, and $\chi=3$ here. The solution to the Cauchy problem (\ref{eq:diffusion}) is
	\begin{equation}
		u\left(t\right)=e^{-tL_{\chi}}u^{0}=e^{-t\chi}e^{tA}u^{0}.
	\end{equation}
	Then, at a given time $t$, the concentration at a vertex $v$ is
	\begin{equation}
		u_{v}\left(t\right)=\sum_{j}\left(e^{-t\chi}e^{tA}\right)_{v,j}u_{j}^{0}.
	\end{equation}
	Suppose that the initial concentration is totally located
	at the vertex $v$, $u_{j}^{0}=\delta_{j,v}$ , where	$\delta_{i,j}$ is the Kronecker delta, then
	\begin{equation}
		u_{v}\left(t\right)=e^{-t\chi}\left(e^{tA}\right)_{vv}.
	\end{equation}
	Then the total concentration remaining at the vertices
	when the initial concentration has been totally allocated at them is
	\begin{equation}
		T_{s}\coloneqq\sum_{v}u_{v}\left(t\right)=e^{-t\chi}\sum_{v}\left(e^{tA}\right)_{vv}=e^{-t\chi}Tr\left(e^{tA}\right),
	\end{equation}
	where $Tr\left(\cdot\right)$ returns the trace of a matrix.
	In a similar way, we can ignore the edge sign and consider the underlying graph $\tilde{G}$,
	\begin{equation}
		T_{u}\coloneqq\sum_{v}u_{v}\left(t\right)=e^{-t\chi}\sum_{v}\left(e^{t\left|A\right|}\right)_{vv}=e^{-t\chi}Tr\left(e^{t\left|A\right|}\right).
	\end{equation}
	A way to account for the ``influence'' of the edge signs on the diffusion is $T_{s}/T_{u}$, such that for $t=1$ we recover the measure $K(G)$ in \eqref{equ:K(G)}. 
	This balance index can be easily generalized by taking any value of $t=\beta$, $K(G,\beta)=Tr(\exp(\beta A))/Tr(\exp(\beta \left|A\right|))$. 
	% The reason why this index is accounting for balance will be clearer later on.
	
	% In Table \ref{Results_Petersen} we gives the values of $K\left(G\right)$ for the five nonsimilar signed Petersen graphs. 
	For the five nonsimilar signed Petersen graphs, although there is a good correlation between $K\left(G\right)$ and $t_{c}$ (Pearson correlation: $r^{2}\approx0.924$), there is an important inversion in the values of $K\left(G\right)$ for graphs c) and d).
	Specifically, for $K(G)$, graph c) has value $0.941$ while graph d) has value $0.947$, but graph c) has larger $t_c$ than graph d); see Supplementary Materials for details. 
	The problem seems to be produced by the differences in the penalization that the cycles of length 5 and 6 receive in the exponential function. To see this,
	we examine the difference between $Tr\left(A^{k}\right)/k!$
	for graph c) and graph d) for values of $1\leq k\leq10$; see Fig.~\ref{Contributions}. We find that the largest contribution is the one of $k=5,$ which is about $-0.333$, followed by that of $k=6,$ which is $0.2$. This reflects the fact that graph c) has more negative cycles of length 5 than d), that d) has more negative cycles of length 6 than c), but that cycles of length 6 are much heavily penalized than those of length 5. 
	We can put it in the following way. If one has to pay \$1 for every $C_{5}^{-}$ but only \$0.1 for each $C_{6}^{-}$, graph c) will have to pay \$8.40, while only \$7.00 is needed for graph d).
	But if the penalty for $C_{6}^{-}$ increases to \$0.5, then graph c) will need to pay \$10 while \$11 will be paid by graph d).
	Therefore, the problem we raise in this paper is how to tune the penalization of longer cycles, such that their contribution to the balance/unbalance of networks becomes more relevant when necessary. We propose to achieve it while keeping the first principles explained before that connect the balance index $K\left(G\right)$ with a (nonconservative) diffusion on graphs. 
	\begin{center}
		\begin{figure}[h]
			\begin{centering}
				\includegraphics[width=0.42\textwidth]{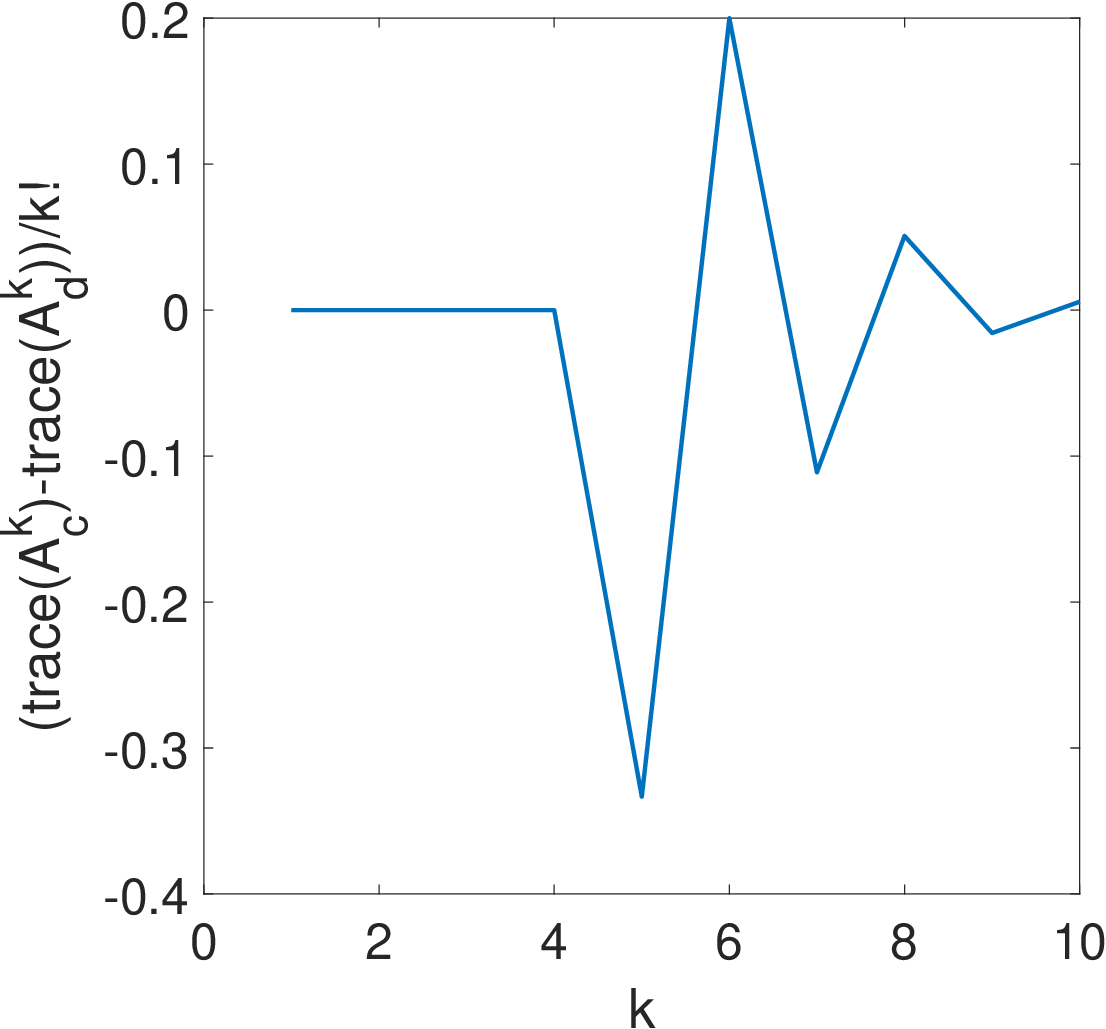} 
				\par\end{centering}
			\caption{Change of the difference of $Tr\left(A^{k}\right)/k!$ between graphs
				c) and d) of Fig. \ref{Petersen graphs}.
				% for values of $1\protect\leq k\protect\leq10$
			}
			
			\label{Contributions} 
		\end{figure}
		\par\end{center}
	
	We end up this section by proving that Altafini's model of consensus on signed graphs is nonconeservative, unless the graph is balanced and the initial vector is the eigenvector corresponding to the eigenvalue $0$. We should also notice that when the graph does not contain any negative edge, Altafini's consensus model is effectively the consensus model with the graph Laplacian and it is conservative. 
	\begin{proposition}
		The diffusion by Altafini's consensus model is nonconservative, unless
		the graph is balanced and the intial vector $u^{0}$ is the eigenvector
		corresponding to the smallest eigenvalue $0$. 
	\end{proposition}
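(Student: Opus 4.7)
The plan is to track the total concentration $M(t):=\mathbf{1}^{\top}u(t)$, so that ``conservative'' means $\dot M\equiv 0$. Differentiating the dynamics \eqref{eq:diffusion-1} and using symmetry of $L_{A}$ I get $\dot M(t)=-\mathbf{1}^{\top}L_{A}u(t)=-(L_{A}\mathbf{1})^{\top}u(t)$, so the whole analysis collapses to understanding the single vector $L_{A}\mathbf{1}$ against $u(t)$.

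The first substantive step is an entrywise computation: $(L_{A}\mathbf{1})_{i}=\sum_{j}|A_{ij}|-\sum_{j\ne i}A_{ij}$, and each positive edge at $i$ contributes $|{+}1|-1=0$ while each negative edge contributes $|{-}1|-(-1)=2$. Hence $(L_{A}\mathbf{1})_{i}=2d_{i}^{-}$, twice the number of negatively signed edges incident to $i$. In particular, $L_{A}\mathbf{1}=0$ iff $G$ has no negative edge, in which case $L_{A}$ reduces to the ordinary graph Laplacian and $\dot M\equiv 0$ trivially (this falls under the balanced case with $V_{2}=\varnothing$).

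For the sufficient (exceptional) direction I would assume $G$ is balanced and invoke Theorem~\ref{the:lit_signed-bal} to obtain a bipartition $V=V_{1}\sqcup V_{2}$. The vector $v$ with $v_{i}=+1$ on $V_{1}$ and $v_{i}=-1$ on $V_{2}$ satisfies $L_{A}v=0$: every positive edge joins equal-sign endpoints and every negative edge joins opposite-sign endpoints, so $\sum_{j}A_{ij}v_{j}=d_{i}v_{i}$ and therefore $(L_{A}v)_{i}=d_{i}v_{i}-\sum_{j}A_{ij}v_{j}=0$. Then $u^{0}=v$ gives $u(t)=e^{-tL_{A}}v=v$ for all $t$, so $M(t)=|V_{1}|-|V_{2}|$ is constant; this establishes the ``unless'' clause.

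For the generic NC direction I would assume $G$ has at least one negative edge, so $L_{A}\mathbf{1}$ is a nonzero nonnegative vector, and observe $\dot M(0)=-2\sum_{i}d_{i}^{-}u_{i}^{0}\ne 0$ for generic $u^{0}$, giving NC immediately. The main obstacle, and the only subtle point, is that $\dot M\equiv 0$ does not literally force $u^{0}\in\ker L_{A}$: expanding $u^{0}$ in an orthonormal eigenbasis $\{v_{k}\}$ of the symmetric $L_{A}$ with eigenvalues $\lambda_{k}\ge 0$, the identity $\sum_{k}e^{-\lambda_{k}t}\lambda_{k}\langle\mathbf{1},v_{k}\rangle\langle v_{k},u^{0}\rangle=0$ for all $t\ge 0$ combined with linear independence of the exponentials forces $\lambda_{k}\langle\mathbf{1},v_{k}\rangle\langle v_{k},u^{0}\rangle=0$ for every $k$. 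Hence, apart from accidental orthogonalities $\langle\mathbf{1},v_{k}\rangle=0$ in the positive spectrum, the only conservative initial data are eigenvectors for $\lambda=0$, which exist precisely when $G$ is balanced and are spanned by the $\pm 1$ vector $v$ above, matching the proposition.
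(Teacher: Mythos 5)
Your proof is correct, but it runs along a genuinely different line from the paper's. The paper works with the closed-form solution $u(t)=e^{-tL_{A}}u^{0}$, expands it in the orthonormal eigenbasis of $L_{A}$, and argues from the $t\to\infty$ limit: unbalanced implies $\mu_{1}>0$ so $u(t)\to\mathbf{0}$ (hence nonconservative), while balanced implies $u(t)\to\phi_{1}\phi_{1}^{T}u^{0}$, and the limiting total mass equals the initial one exactly when $u^{0}=\phi_{1}$. You instead differentiate the total mass $M(t)=\mathbf{1}^{T}u(t)$ and reduce everything to the single vector $L_{A}\mathbf{1}$, which you compute entrywise to be $2d^{-}$ (twice the negative degrees); you then exhibit the conserving initial condition concretely as the $\pm1$ indicator of the Harary bipartition from Theorem~\ref{the:lit_signed-bal} rather than abstractly as $\phi_{1}$. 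Your route buys two things the paper's argument does not make explicit: a physical identification of the instantaneous rate of mass loss, $\dot M=-2\sum_{i}d_{i}^{-}u_{i}$, and an honest accounting of the exceptional conservative data --- the all-positive case (where $L_{A}\mathbf{1}=0$ and \emph{every} $u^{0}$ conserves, which the paper only mentions in the surrounding text) and the ``accidental'' orthogonalities $\langle\mathbf{1},P_{\lambda}u^{0}\rangle=0$ in the positive spectrum, which also conserve mass and which the paper's ``if and only if'' silently ignores. One small correction: with repeated positive eigenvalues, linear independence of the exponentials $e^{-\lambda t}$ only forces the \emph{sum} $\lambda\langle\mathbf{1},P_{\lambda}u^{0}\rangle=0$ over each eigenspace, not the vanishing of each individual term $\lambda_{k}\langle\mathbf{1},v_{k}\rangle\langle v_{k},u^{0}\rangle$; this does not change your conclusion, only its phrasing. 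Both proofs rely without proof on the same external fact that $\ker L_{A}\neq\{0\}$ iff the connected signed graph is balanced.
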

	\begin{proof}
		The solution to the Altafini's consensus is 
		\begin{align}
			u(t)=e^{-tL_{A}}u^{0}.
		\end{align}
		Let $0\le\mu_{1}\le\mu_{2}\le\dots\le\mu_{n}$ be the eigenvalues
		of $L_{A}$, and let $\phi_{i}$ be the orthonormal eigenvector associated
		with $\mu_{i}$. Then, 
		\begin{align}
			u(t)=e^{-t\mu_{1}}\phi_{1}\phi_{1}^{T}u^{0}+e^{-t\mu_{2}}\phi_{2}\phi_{2}^{T}u^{0}+\dots+e^{-t\mu_{n}}\phi_{n}\phi_{n}^{T}u^{0}.\label{equ:u-eig}
		\end{align}
		
		We know that if a signed graph is unbalanced, $\mu_{1}>0$. Then,
		\begin{align*}
			\lim_{t\to\infty}u(t)=\mathbf{0},
		\end{align*}
		where $\mathbf{0}$ is the all-zero vector. Hence, the diffusion is
		nonconservative.
		
		We now consider the balanced case, where $\mu_{1}=0$ and $\mu_{2}>0$.
		Then, 
		\begin{align*}
			\lim_{t\to\infty}u(t)=\phi_{1}\phi_{1}^{T}u^{0}.
		\end{align*}
		Hence $\lim_{t\to\infty}\mathbf{1}^{T}u(t)=\mathbf{1}^{T}u^{0}$ if
		and only if $u^{0}=\phi_{1}$. In the case of $u^{0}=\phi_{1}$, from
		Eq.~\eqref{equ:u-eig}, we have $\mathbf{1}^{T}u(t)=\mathbf{1}^{T}u^{0}$,
		by the orthogonality of eigenvectors. Hence, the diffusion is conservative
		if and only if the graph is balanced and $u^{0}=\phi_{1}$. 
	\end{proof}
	
	% The outline is not required, but we show an example here.
	%The paper is organized as follows. Our main results are in \cref{sec:main}, our new algorithm is in \cref{sec:alg}, experimental results are in \cref{sec:experiments}, and the conclusions follow in \cref{sec:conclusions}.
	
	\section{Main results}
	\label{sec:main}
	
	\subsection{Nonconservative fractional diffusion and balance}
	We know that Altafini's dynamics on signed graphs is nonconservative. Let us now consider a more general nonconservative diffusive model on the signed graph based on the Lerman-Ghosh Laplacian $L_\chi$ \cite{ghosh2024non,lerman2012network}. To make the process more general, we also replace the standard time derivative $\dot{u}\left(t\right)$ by the time-fractional Caputo derivative $D_{t}^{\alpha}$ as in Eq.~\eqref{equ:caputo}.
	Hence, the nonconservative diffusion on the signed graph we consider is
	\begin{equation}
		D_{t}^{\alpha}u\left(t\right)=-L_{\chi}u\left(t\right);u\left(0\right)=u^{0}.
		\label{eq:diffusion-1-1}
	\end{equation}
	The solution of Eq. (\ref{eq:diffusion-1-1}) is given by 
	\begin{align}
		u(t)=E_{\alpha}(-t^{\alpha}L_{\chi})u^{0},\label{matrix_solution}
	\end{align}
	where $E_{\alpha}(\cdot)$ is the Mittag-Leffler function as in Eq.~\eqref{equ:ML}.
	Let us focus again on the concentration at a vertex designated by $v$,
	\begin{equation}
		u_{v}\left(t\right)=\sum_{j}\left(E_{\alpha}\left(-t^{\alpha}L_{\chi}\right)\right)_{vj}u_{j}^{0},
	\end{equation}
	and if the initial concentration is totally located at the vertex $v$, $u_{j}^{0}=\delta_{j,v}$, we get
	\begin{equation}
		u_{v}\left(t\right)=\left(E_{\alpha}\left(-t^{\alpha}L_{\chi}\right)\right)_{vv}.
	\end{equation}
	One main difference between the exponential and the Mittag-Leffler function is that in general $E_{\alpha}\left(P+Q\right)\neq E_{\alpha}\left(P\right)E_{\alpha}\left(Q\right)$, even when $P$ and $Q$ commute \cite{sadeghi2018some}. This equality holds in general only when (i) $P$ and $Q$ commute and (ii) $\alpha=1$. 
	
	Here, we consider the special case when $\chi=0$, such that
	\begin{align}
		D_{t}^{\alpha}u\left(t\right)=Au\left(t\right);u\left(0\right)=u^{0}.
		\label{eq:diffusion-1-chi0}
	\end{align}
	The concentration at vertex $v$ with $u_{j}^{0}=\delta_{j,v}$ is
	\begin{align*}
		u_{v}\left(t\right)=\left(E_{\alpha}\left(t^{\alpha}A\right)\right)_{vv}.
	\end{align*}
	Then the total concentration remaining at the vertices when the initial concentration has been totally allocated at them is
	\begin{equation}
		\tilde{T}_{s}\coloneqq\sum_{v}u_{v}\left(t\right)=\sum_{v}\left(E_{\alpha}\left(t^{\alpha}A\right)\right)_{vv}=Tr\left(E_{\alpha}\left(t^{\alpha}A\right)\right).
	\end{equation}
	Similarly, we can ignore the edge sign and obtain the total concentration in $\tilde{G}$, 
	\begin{equation}
		\tilde{T}_{u}\coloneqq\sum_{v}u_{v}\left(t\right)=\sum_{v}\left(E_{\alpha}\left(t^{\alpha}\left|A\right|\right)\right)_{vv}=Tr\left(E_{\alpha}\left(t^{\alpha}\left|A\right|\right)\right).
	\end{equation}
	Finally, we summarise the influence of the edge signs on the diffusion as the ratio $\tilde{T}_{s}/\tilde{T}_{u}$,
	such that for $t=1$ we have
	\begin{equation}
		K_{\alpha}\left(G_{s}\right)\coloneqq\dfrac{Tr\left(E_{\alpha}\left(A\right)\right)}{Tr\left(E_{\alpha}\left(\left|A\right|\right)\right)}.
	\end{equation}
	We note that $K\left(G\right)$ is the particular case when $\alpha=1$.
	
	\subsection{How global balance is accounted for}
	A closed walk (CW) is said to be positive (negative) if the product of the signs of all its composing edges is positive (negative). Let $M_{k}\left(i,i\right)=\left(A^{k}\right)_{ii}$ be the total ``number" of CWs of length $k$ starting at vertex $i$, then
	\begin{equation}
		M_{k}\left(i,i\right)=\mu_{k}^{+}\left(i,i\right)-\mu_{k}^{-}\left(i,i\right),
	\end{equation}
	where $\mu_{k}^{+}\left(i,i\right)$ is the number of positive CWs of length $k$ starting at $i$, and $\mu_{k}^{-}\left(i,i\right)$ is the same for negative CWs \cite{diaz2024signed}. Obviously,
	\begin{equation}
		\begin{split}
			Tr\left(E_{\alpha}\left(A\right)\right) 
			=\sum_{i=1}^{n}\sum_{k=0}^{\infty}\dfrac{M_{k}\left(i,i\right)}{\varGamma\left(\alpha k+1\right)}
			% & =\sum_{i=1}^{n}\sum_{k=0}^{\infty}\dfrac{\mu_{k}^{+}\left(i,i\right)-\mu_{k}^{-}\left(i,i\right)}{\varGamma\left(\alpha k+1\right)}\\
			% =\sum_{i=1}^{n}\sum_{k=0}^{\infty}\dfrac{\mu_{k}^{+}\left(i,i\right)}{\varGamma\left(\alpha k+1\right)}-\left|\sum_{i=1}^{n}\sum_{k=0}^{\infty}\dfrac{\mu_{k}^{-}\left(i,i\right)}{\varGamma\left(\alpha k+1\right)}\right|
			=Tr\left(E_{\alpha}^{+}\left(A\right)\right)-
			% \left|Tr\left(E_{\alpha}^{-}\left(A\right)\right)\right|
			Tr\left(E_{\alpha}^{-}\left(A\right)\right),
		\end{split}
	\end{equation}
	where $Tr\left(E_{\alpha}^{\pm}\left(A\right)\right)=\sum_{i=1}^{n}\sum_{k=0}^{\infty}\mu_{k}^{\pm}\left(i,i\right)/\varGamma\left(\alpha k+1\right)$ are the positive and negative contributions to $Tr\left(E_{\alpha}\left(A\right)\right)$. We note that they are not the same as $Tr\left(E_{\alpha}\left(A^{\pm}\right)\right)$ where $A^{\pm}$ are the adjacency matrices only for positive and negative edges of $G_{s}$, respectively. Similarly, 
	\begin{equation}
		\begin{split}Tr\left(E_{\alpha}\left(\left|A\right|\right)\right) & =Tr\left(E_{\alpha}^{+}\left(A\right)\right)+
			% \left|Tr\left(E_{\alpha}^{-}\left(A\right)\right)\right|
			Tr\left(E_{\alpha}^{-}\left(A\right)\right)\end{split}.
	\end{equation}
	Hence, 
	\begin{equation}
		K_{\alpha}\left(G_{s}\right)=\dfrac{Tr\left(E_{\alpha}^{+}\left(A\right)\right)-
			% \left|Tr\left(E_{\alpha}^{-}\left(A\right)\right)\right|
			Tr\left(E_{\alpha}^{-}\left(A\right)\right)}{Tr\left(E_{\alpha}^{+}\left(A\right)\right)+
			% \left|Tr\left(E_{\alpha}^{-}\left(A\right)\right)\right|
			Tr\left(E_{\alpha}^{-}\left(A\right)\right)}.
		\label{eq:Ka-pn}
	\end{equation}
	
	We now understand $K_\alpha(G_s)$ through its two different terms. 
	Let us recall that a trivial CW is a walk starting at and ending at the same vertex but not involving any cycle in the graph. Hence, any trivial CW is always positive. Therefore, $Tr\left(E_{\alpha}^{+}\left(A\right)\right)$ accounts for all trivial CWs and nontrivial positive CWs.
	In a nontrivial positive CW, there can be any number of balanced cycles, and also even number of unbalanced cycles. 
	We can understand it as follows. Consider a negative triangle with sign pattern $+,+,-$ for edges $(A,B), (B,C), (A,C)$, respectively. Then, a voting system on this triangle will end up in contradictions after one round of information passing. For example, if A votes Y(N), then B will vote Y(N), and C will also vote Y(N), but then A will need to vote N(Y) since it is in conflict with C, contradicting its initial vote.
	However, if the number of rounds is even such contradictions disappear, eliminating the tension in the system. 
	In closing, the term $Tr\left(E_{\alpha}^{+}\left(A\right)\right)$ accounts for all CWs in the signed graph that involves no tensions from the perspective of balance. 
	This necessarily leads to the fact that all tensions are encoded in $Tr\left(E_{\alpha}^{-}\left(A\right)\right)$. Indeed, any negative CW necessarily contains a negative cycle, which by definition is unbalanced. Therefore, the difference $Tr\left(E_{\alpha}^{+}\left(A\right)\right)-
	% \left|Tr\left(E_{\alpha}^{-}\left(A\right)\right)\right|
	Tr\left(E_{\alpha}^{-}\left(A\right)\right)$ accounts for the magnitude of ``tensions'' existing in the signed graph in terms of balance, such that $K_{\alpha}\left(G_{s}\right)$ will be $0$ if the balanced and unbalanced contributions are equal,
	and will be $1$ if there are no unbalanced contributions.
	
	\subsection{How memory is accounted for}
	We first show that the time-fractional Caputo derivative accounts for the memory of the system about its past. We start by writing
	\begin{equation}
		\begin{split}D_{t}^{\alpha}u\left(t\right)  =\dfrac{1}{\varGamma\left(1-\alpha\right)}\int_{0}^{t}\left[\dfrac{1}{\left(t-\tau\right)^{\alpha}}\right]u'\left(\tau\right)d\tau
			=  \dfrac{1}{\varGamma\left(1-\alpha\right)}\int_{0}^{t}w\left(\tau\right)u'\left(\tau\right)d\tau,
		\end{split}
	\end{equation}
	where $w\left(\tau\right) = 1/\left(t-\tau\right)^{\alpha}$ is used to indicate that
	$u'\left(\tau\right)$ is integrated in a weighted way that $w\left(\tau\rightarrow0\right)$
	is significantly smaller than $w\left(\tau\rightarrow t\right)$.
	Odibat \cite{odibat2006approximations} has proved that
	\begin{equation}
		D_{t}^{\alpha}u\left(t\right)=C\left(u,h,\alpha\right) - E_{C}\left(u,h,\alpha\right),
	\end{equation}
	where $E_{C}\left(u,h,\alpha\right)\leq\mathcal{O}\left(h^{2}\right)$ is the error term, and
	\begin{equation}
		\begin{split}C\left(u,h,\alpha\right) & =\dfrac{h^{1-\alpha}}{\varGamma\left(3-\alpha\right)}\left[\underset{\textnormal{remote past}}{\underbrace{\left(\left(k-1\right)^{2-\alpha}-\left(k+\alpha-2\right)k^{1-\alpha}\right)u'\left(0\right)}}\right.\\
			& \left.+\underset{\textnormal{recent past}}{\underbrace{\sum_{j=1}^{k-1}\left(\left(k-j+1\right)^{2-\alpha}-2\left(k-j\right)^{2-\alpha}+\left(k-j-1\right)^{2-\alpha}\right)u'\left(t_{j}\right)}} +\underset{\textnormal{present}}{\underbrace{u'\left(t\right)}}\right],
		\end{split}
		\label{eq:trapezoidal}
	\end{equation}
	where the interval $\left[0,t\right]$ has been subdivided into $k$ subintervals $\left[t_{j},t_{j+1}\right]$ for $j=0,\ldots,k$ of equal length $h=t/k$. The term $C\left(u,h,\alpha\right)$ confirms that differently from the standard time derivative which considers only the present, the Caputo one takes into account the ``remote past'' and ``recent past'' together with the ``present'' state of the evolution of the function $u\left(\cdot\right)$. 
	Additionally, the time-fractional Caputo derivative gives smaller weight to the remote past, and such weight increases as we approach to the contribution of the present, which receives the largest weight.
	
	Let us now see the special case of $\alpha=1$ and how memory could be incorporated while changing $\alpha$. 
	We note that $C\left(u,h,\alpha=1\right)=u'\left(t\right)$.
	The solution of the NC diffusion \eqref{eq:diffusion-1-chi0} with $u_{j}^{0}=\delta_{j,v}$ is given by $u_{v}\left(t\right)=\left(e^{tA}\right)_{vv}$,
	i.e., the exponential of the adjacency matrix. For $t=1$, we know that 
	\begin{equation}
		e^{A}=I+A+\dfrac{A^{2}}{2!}+\ldots+\dfrac{A^{k}}{k!}+\ldots,
		\label{eq:expansion}
	\end{equation}
	which means that walks taking a large number of steps are so heavily penalized by $1/\left(k!\right)$ that they are almost forgotten. 
	Let us consider again an unbalanced cycle of $10$ vertices and only one negative edge, $C_{10}^-$. Here, we truncate the expressions (\ref{eq:expansion}) and $e^{\left|A\right|}$ at a given value $k$, denoted by $e^{A}\left(k\right)$ and $e^{\left|A\right|}\left(k\right)$, respectively.
	Then, for any $k<10$, we have that $Tr(e^{A}\left(k\right))=Tr(e^{\left|A\right|}\left(k\right)).$ Therefore, any penalization $c_{k}$ in $f\left(A\right)=\sum_{k=0}^{\infty}c_{k}A^{k}$ that makes $c_{10}Tr(\left|A\right|^{10})\approx 0$ will lead to $Tr\left(f\left(A\left(C_{10}^-\right)\right)\right) \approx Tr\left(f\left(\left|A\left(C_{10}^-\right)\right|\right)\right)$. This is exactly what happens with $c_{k}=1/k!$, where $Tr\left(\exp\left(A\left(C_{10}^-\right)\right)\right) \approx 22.7958$ and $Tr\left(\exp\left(\left|A\left(C_{10}^-\right)\right|\right)\right)\approx22.7959$ leading to $K_{1}\left(C_{10}^-\right)\approx0.99999947$. That is, the index has almost completely forgotten that the graph contains a negative edge. 
	However, the extra freedom introduced in $\alpha$ allows us to incorporate the information in the past in an appropriate manner. 
	For $C_{10}^-$, $\alpha=0.5$ makes that even the remote past receives some weight in the navigation of the diffusive particles, remembering the presence of the negative edge, with $K_{0.5}\approx 0.98290619$. Such memory can further take effect by dropping $\alpha$, which may be considered as the memory effect parameter, e.g., $K_{0.25}\approx0.109$.
	%0.108948
	
	We now proceed to find the analytical expression for the degree of balance with memory for unbalanced cycles, i.e., cycles with an odd number of negative edges. We start by proving that unbalanced cycles share the same spectrum, independently of the exact number of negative edges. 
	\begin{proposition}
		There are two switching classes for signed cycles of length $n$, one corresponding to balance and the other corresponding to unbalance.
	\end{proposition}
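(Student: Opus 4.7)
The plan is to establish two complementary facts: (a) balanced and unbalanced signings of $C_n$ lie in distinct switching classes, and (b) within each of these two categories, any two signings are switching equivalent. Together these give exactly two switching classes on $C_n$, one balanced and one unbalanced.

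For (a), I would invoke the standard observation that switching preserves the sign of every cycle. Switching a subset $S \subseteq V$ flips exactly the edges in the cut $(S, V\setminus S)$, and any cycle crosses this cut an even number of times, so the product of edge signs around the cycle is unchanged. Since $C_n$ has a unique cycle (up to direction), its sign is a switching invariant, which immediately separates balanced from unbalanced signings.

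For (b) in the balanced case, I would apply Theorem~\ref{the:lit_signed-bal} to any balanced signing $\rho$ to obtain a bipartition $V = V_1 \cup V_2$ in which switching $V_2$ renders every edge positive. Hence every balanced signing is switching-equivalent to the all-positive one, so the balanced signings form a single switching class.

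For (b) in the unbalanced case, I would fix a canonical representative $\rho^*$ with exactly one negative edge, say on $(n,1)$, and show that an arbitrary unbalanced $\rho$ can be switched into $\rho^*$. Labeling the vertices $1,2,\ldots,n$ cyclically, I would define $\psi\colon V \to \{\pm 1\}$ greedily by $\psi(1)=1$ and $\psi(i+1) = \rho(i,i+1)\,\psi(i)$ for $i=1,\ldots,n-1$; this ensures that after switching the subset $S = \{v : \psi(v) = -1\}$, every edge $(i,i+1)$ with $i<n$ becomes positive. The consistency condition at the closing edge $(n,1)$ then demands the resulting sign there to equal $-1$, which rearranges to $\prod_{e \in E}\rho(e) = -1$---precisely the hypothesis of unbalance. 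The argument is essentially bookkeeping, so I do not anticipate a genuine technical obstacle; the only delicate point is verifying the closing condition in the last step, which is exactly where the unbalance assumption enters in an essential way.
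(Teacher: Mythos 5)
Your proof is correct and follows essentially the same route as the paper's: both arguments reduce an arbitrary unbalanced signing to the canonical one-negative-edge representative by making the underlying path/tree all-positive via switching (your greedy construction of $\psi$ is exactly the explicit form of the paper's ``delete an edge, switch the resulting balanced tree to all-positive'' step). You are somewhat more careful than the paper in explicitly verifying part (a), that the sign of the cycle is a switching invariant and hence separates the two classes, which the paper treats as known from the preliminaries.
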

	
	\begin{proof}
		For balanced cycles of length $n$, we know that they form a switching class. We then prove that all unbalanced signed cycles form one switching class. For an unbalanced signed cycle of length $n$, denoted by $C_{n}$, if we randomly remove one edge $e$, it becomes a tree $T_{n}$. We know that every signed tree is balanced, hence $T_{n}$ is balanced and is switching equivalent to the all-positive configuration. Edge $e$ necessarily breaks the balance structure, since otherwise $C_n$ is balanced. Hence, only one edge violates the balance structure, and $C_{n}$ is then switching equivalent to the unbalanced cycles of length $n$ and one negative edge. Hence, all unbalanced signed cycles of length $n$
		form one switching class. 
	\end{proof}
	\begin{corollary}
		All unbalanced signed cycles of length $n$ share the same eigenvalues, i.e., they are cospectral.
	\end{corollary}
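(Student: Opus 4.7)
The plan is to derive the corollary directly from the proposition just proved, combined with the switching invariance of signed spectra recalled in Section~\ref{sec:preliminary}. Specifically, the proposition establishes that every unbalanced signed cycle of length $n$ lies in a single switching class; the preliminaries note (citing \cite{Atay_signedCheeger_2020,zaslavsky1982signed}) that the spectrum of the signed adjacency matrix is invariant under switching. Chaining these two facts yields the conclusion.

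Concretely, I would proceed as follows. First, take any two unbalanced signed cycles $C_n^{(1)}$ and $C_n^{(2)}$ on the same underlying $n$-cycle. By the proposition, there exists a vertex subset $S \subseteq V$ such that switching $S$ transforms the sign configuration of $C_n^{(1)}$ into that of $C_n^{(2)}$, so $\rho^{(1)} \approx \rho^{(2)}$. Second, recall that switching by $S$ acts on the adjacency matrix by conjugation $A^{(2)} = D A^{(1)} D$, where $D$ is the diagonal sign matrix with $D_{ii} = -1$ for $i \in S$ and $D_{ii} = +1$ otherwise. Since $D^2 = I$, this is a similarity transformation, hence $A^{(1)}$ and $A^{(2)}$ have identical spectra. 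Third, conclude that all unbalanced signed cycles of length $n$ are cospectral.

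There is essentially no obstacle: the corollary is a one-line consequence once the proposition and the switching-invariance lemma are in hand. If I wanted to make the argument fully self-contained rather than invoking the cited switching invariance, I would spell out the conjugation-by-$D$ identity above as a short lemma; this is the only nontrivial step, and it is standard. No further calculation or case analysis on the number of negative edges is needed, precisely because the proposition has already collapsed all such cases into a single switching class.
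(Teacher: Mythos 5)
Your argument is correct and is exactly the one the paper intends: the corollary follows immediately from the preceding proposition (all unbalanced signed cycles of length $n$ form one switching class) together with the switching invariance of signed spectra already recalled in Section~\ref{sec:preliminary}. Spelling out the conjugation $A^{(2)} = D A^{(1)} D$ with $D^2 = I$ is a nice touch but adds nothing beyond the cited invariance.
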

	
	As a consequence of the previous results we will focus on the analytical study of unbalanced cycles as a general class.
	\begin{definition}[\cite{martin2023fractional}]
		Let $E_{\alpha}\left(z\right)$ be the Mittag-Leffler function of $z.$ Then, we define the following integral:
		\begin{equation}
			\mathcal{E}_{\nu,\alpha}\left(z\right)\coloneqq\frac{1}{\pi}\int_{0}^{\pi}\cos\left(\nu\theta\right)E_{\alpha}\left(z\cos\theta\right)d\theta,\:\nu\in\mathbb{Z}.
			\label{eq:Bessel}
		\end{equation}
	\end{definition}
	
	\begin{remark}
		Notice that $\mathcal{E}_{\nu,\alpha=1}\left(z\right)=\frac{1}{\pi}\int_{0}^{\pi}e^{z\cos\theta}\cos\left(\nu\theta\right)d\theta\eqqcolon I_{\nu}\left(z\right)$ is the modified Bessel function of the first kind. The fractional modified Bessel function of the first kind can be calculated by using the following result.
	\end{remark}
	
	\begin{lemma}[\cite{martin2023fractional}]
		Let $\mathcal{E}_{\nu,\alpha}(z)$ be the fractional modified Bessel function of the first kind of $z$ with fractional parameter $\alpha$ and $\nu\in\mathbb{Z}$. Then, 
		
		\begin{equation}
			\mathcal{E}_{\nu,\alpha}(z)=\sum_{k=0}^{\infty}\frac{\left(2k+\nu\right)!}{\varGamma\left(\alpha\left(2k+\nu\right)+1\right)k!\left(k+\nu\right)!}\left(\frac{z}{2}\right)^{2k+\nu}.
		\end{equation}
	\end{lemma}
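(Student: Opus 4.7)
The plan is to reduce the lemma to the classical Fourier integral
$\frac{1}{\pi}\int_{0}^{\pi}\cos(\nu\theta)\cos^{n}\theta\,d\theta$
by inserting the series definition of the Mittag-Leffler function into
the integral \eqref{eq:Bessel} and exchanging summation with
integration. Concretely, I would start from the defining expansion
$E_{\alpha}(z\cos\theta)=\sum_{n=0}^{\infty}(z\cos\theta)^{n}/\Gamma(\alpha n+1)$, which converges uniformly in $\theta\in[0,\pi]$ for every fixed $z$ because the entire function $E_{\alpha}$ has infinite radius of convergence and $|z\cos\theta|\leq|z|$. Uniform convergence justifies swapping the sum and the integral, yielding
\begin{equation*}
\mathcal{E}_{\nu,\alpha}(z)=\sum_{n=0}^{\infty}\frac{z^{n}}{\Gamma(\alpha n+1)}\,J_{n}(\nu),\qquad J_{n}(\nu)\coloneqq\frac{1}{\pi}\int_{0}^{\pi}\cos(\nu\theta)\cos^{n}\theta\,d\theta.
\end{equation*}

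Next I would evaluate $J_{n}(\nu)$ for $\nu\in\mathbb{Z}_{\geq 0}$
by expanding $\cos^{n}\theta=2^{-n}\sum_{j=0}^{n}\binom{n}{j}e^{i(n-2j)\theta}$
and using product-to-sum with $\cos(\nu\theta)$, followed by the
orthogonality relation $\int_{0}^{\pi}\cos(m\theta)\,d\theta=\pi\,\delta_{m,0}$
valid for integer $m$. A parity check shows that $J_{n}(\nu)=0$ unless
$n-\nu$ is a nonnegative even integer; writing $n=2k+\nu$ with
$k\geq 0$, precisely two terms in the binomial expansion survive (namely
$j=k$ and $j=k+\nu$, which contribute equal amounts by the symmetry
$\binom{2k+\nu}{k}=\binom{2k+\nu}{k+\nu}$). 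A short bookkeeping step
gives
\begin{equation*}
J_{2k+\nu}(\nu)=\frac{1}{2^{2k+\nu}}\binom{2k+\nu}{k}=\frac{(2k+\nu)!}{2^{2k+\nu}\,k!\,(k+\nu)!}.
\end{equation*}

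Substituting this back into the series for $\mathcal{E}_{\nu,\alpha}(z)$
and re-indexing the (now nonzero) terms by $k$ immediately produces the
claimed identity. For negative $\nu$, the symmetry
$\cos(\nu\theta)=\cos(-\nu\theta)$ gives $\mathcal{E}_{\nu,\alpha}(z)=\mathcal{E}_{|\nu|,\alpha}(z)$, so the formula extends to all $\nu\in\mathbb{Z}$ with $|\nu|$ in place of $\nu$ (consistent with the convention that negative factorials are excluded from the sum).

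The routine computation of $J_{n}(\nu)$ is the main step but it is
standard; the only delicate point is the justification of the
term-by-term integration of the Mittag-Leffler series, which is secured
once one observes that $\sum_{n}|z|^{n}/\Gamma(\alpha n+1)$ converges for
every $z$ and dominates the integrand uniformly on $[0,\pi]$, so
dominated convergence applies without further subtlety.
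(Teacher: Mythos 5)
Your proof is correct. Note that the paper does not actually prove this lemma---it is imported verbatim from the cited reference \cite{martin2023fractional}---so there is no in-paper argument to compare against; what you have written is a complete, self-contained derivation, and it is the standard one: expand $E_{\alpha}(z\cos\theta)$ as a power series, integrate term by term (justified exactly as you say, since $\sum_{n}|z|^{n}/\varGamma(\alpha n+1)=E_{\alpha}(|z|)<\infty$ dominates uniformly on $[0,\pi]$), and evaluate $J_{n}(\nu)=\frac{1}{\pi}\int_{0}^{\pi}\cos(\nu\theta)\cos^{n}\theta\,d\theta$ via the binomial expansion of $\cos^{n}\theta$ and the orthogonality $\frac{1}{\pi}\int_{0}^{\pi}\cos(m\theta)\,d\theta=\delta_{m,0}$. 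Your value $J_{2k+\nu}(\nu)=2^{-(2k+\nu)}\binom{2k+\nu}{k}$ is right, and substituting it reproduces the claimed series. Two minor bookkeeping remarks: for $\nu=0$ only the single index $j=k$ survives rather than two distinct ones, but it is picked up with weight $1$ instead of $\tfrac12$ because both Kronecker deltas in $\frac{1}{2}(\delta_{m,\nu}+\delta_{m,-\nu})$ fire simultaneously, so the formula is unchanged (and $\nu=0$ is the only case the paper actually uses, in $\mathcal{E}_{0,\alpha}(2\gamma)$); and the vanishing of $J_{n}(\nu)$ for $n<\nu$ of the correct parity deserves the one-line observation that both candidate indices $j=(n\mp\nu)/2$ then fall outside $\{0,\dots,n\}$, which you implicitly use when you restrict to $n-\nu$ a \emph{nonnegative} even integer.
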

	% Then we have the following.
	\begin{theorem}
		Let $C_{n}^{-}$ be the cycle graph with an odd number of negative edges and $0<\alpha\leq1$. Then,
		\begin{equation}
			K_{\alpha}^\gamma\left(C_{n}^{-}\right)=\dfrac{\sum_{k=1}^{n}E_{\alpha}\left(2\gamma\cos\left(\dfrac{\left(\left(2k+1\right)\pi\right)}{n}\right)\right)}{\sum_{k=1}^{n}E_{\alpha}\left(2\gamma\cos\left(\dfrac{\left(2k\pi\right)}{n}\right)\right)},
			\label{eq:cycle_Ka}
		\end{equation}
		where $K_{\alpha}^\gamma \coloneqq Tr\left(E_\alpha\left(\gamma A\right)\right) / Tr\left(E_\alpha\left(\gamma \left|A\right|\right)\right)$ is a even more general form of the index $K_{\alpha}$, and  $\gamma$ is a positive constant, %= \varGamma\left(\alpha+1\right)
		and
		\begin{equation}
			\underset{n\rightarrow\infty}{\lim}\frac{1}{n}\sum_{k=1}^{n}E_{\alpha}\left(2\gamma\cos\left(\dfrac{\left(2k\pi\right)}{n}\right)\right) = \mathcal{E}_{0,\alpha}\left(2\gamma\right).\label{eq:cycle_ML}
		\end{equation}
		% \begin{equation}
			%     \underset{n\rightarrow\infty}{\lim}K_{\alpha}^\gamma\left(C_{n}^{-}\right)=\dfrac{\sum_{k=1}^{n}E_{\alpha}\left(2\gamma\cos\left(\dfrac{\left(\left(2k+1\right)\pi\right)}{n}\right)\right)}{\mathcal{E}_{0,\alpha}\left(2\gamma\right)},\label{eq:cycle_ML}
			% \end{equation}
	\end{theorem}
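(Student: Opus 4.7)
The plan is to reduce the problem to an explicit spectral calculation on an antiperiodic tridiagonal matrix, followed by recognising a Riemann sum. By the corollary preceding the statement, every unbalanced cycle of length $n$ is cospectral with the representative carrying exactly one negative edge, so I fix the representative in which only the edge $\{1,n\}$ has sign $-1$ and work with its adjacency matrix $A$. Because $Tr\,E_\alpha(\gamma M)$ depends only on the spectrum of $M$, this reduction loses no information.

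Next I would determine the spectra of $A$ and $|A|$ by a Fourier ansatz $v_j = e^{i\theta j}$. For interior rows the recurrence $(Mv)_j = v_{j-1} + v_{j+1} = 2\cos\theta\, v_j$ holds for both matrices, so only the boundary row distinguishes them. For $|A|$ the boundary enforces periodicity $e^{i\theta n} = 1$, giving $\theta_k = 2k\pi/n$; for $A$ the sign flip on $\{1,n\}$ enforces antiperiodicity $e^{i\theta n} = -1$, giving $\theta_k = (2k+1)\pi/n$. Each condition produces $n$ modes, counted with multiplicity according to the conjugate pairing $\theta_k \leftrightarrow \theta_{n-1-k}$, and these exhaust the spectra. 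Since $E_\alpha$ is an entire power series and both matrices are symmetric, the spectral mapping theorem yields $Tr\,E_\alpha(\gamma M) = \sum_i E_\alpha(\gamma \lambda_i(M))$. The summands are $n$-periodic in $k$ (shifting $k$ by $n$ shifts the cosine argument by $2\pi$), so $\sum_{k=0}^{n-1}$ and $\sum_{k=1}^{n}$ agree, yielding \eqref{eq:cycle_Ka}.

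For the limit, I would recognise $\tfrac{1}{n}\sum_{k=1}^n E_\alpha(2\gamma\cos(2k\pi/n))$ as a right-endpoint Riemann sum for the continuous function $\theta \mapsto E_\alpha(2\gamma\cos\theta)$ on $[0,2\pi]$ with the uniform partition of spacing $2\pi/n$ (up to the factor $2\pi$ absorbed into the normalisation). Continuity of $E_\alpha$ gives convergence to $\tfrac{1}{2\pi}\int_0^{2\pi} E_\alpha(2\gamma\cos\theta)\,d\theta$. Using the symmetry $\cos(2\pi - \theta) = \cos\theta$ to fold the integral onto $[0,\pi]$ and comparing with the $\nu = 0$ case of definition \eqref{eq:Bessel}, the limit collapses to $\mathcal{E}_{0,\alpha}(2\gamma)$, establishing \eqref{eq:cycle_ML}.

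The only genuine subtlety is the spectral identification: verifying that the antiperiodic Fourier modes form a complete orthogonal basis with the correct multiplicities, rather than over- or under-counting at the coincidences $\theta_k = 2\pi - \theta_{n-1-k}$. This is handled by checking that the listed cosines, counted with the natural pairing, sum to $0 = Tr(A)$ and have squared sum equal to $Tr(A^2) = 2|E| = 2n$, which pins down the multiset. All remaining steps---the spectral mapping for entire functions applied to symmetric matrices, the $n$-periodic relabelling of the summation indices, and Riemann-sum convergence of a continuous periodic integrand---are routine.
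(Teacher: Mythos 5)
Your proposal is correct and follows essentially the same route as the paper: identify the spectra of the signed and unsigned $n$-cycles (eigenvalues $2\cos((2k+1)\pi/n)$ and $2\cos(2k\pi/n)$), apply the spectral mapping for the entire function $E_\alpha$, and obtain the limit as a Riemann sum of $E_\alpha(2\gamma\cos\theta)$ over $[0,2\pi]$ folded onto $[0,\pi]$ to match the $\nu=0$ fractional Bessel integral. The only difference is that you derive the eigenvalues explicitly via the periodic/antiperiodic Fourier ansatz, whereas the paper cites them from the literature.
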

	\begin{proof}
		% We start by expressing $K_{\alpha}\left(C_{n}^{-}\right)$ in terms of the eigenvalues of the adjacency matrix,
		% \begin{equation}
			%     K_{\alpha}\left(C_{n}^{-}\right)=\dfrac{\sum_{k=1}^{n}E_{\alpha}\left(2\gamma\cos\left(\dfrac{\left(\left(2k+1\right)\pi\right)}{n}\right)\right)}{\sum_{k=1}^{n}E_{\alpha}\left(2\gamma\cos\left(\dfrac{\left(2k\pi\right)}{n}\right)\right)}.
			% \end{equation}
		We can obtain Eq.~\eqref{eq:cycle_Ka} by the eigenvalues of cycles of size $n$ and those of the same size and one negative edge \cite{mithai2012cycle}. 
		Then for the limit, since for $k=1,2,\dots,n$, the angles $2k\pi/n$ uniformly cover the interval $[0,2\pi]$, we can write
		\begin{equation}
			\underset{n\rightarrow\infty}{\lim}\left(\dfrac{1}{n}\sum_{k=1}^{n}E_{\alpha}\left(2\gamma\cos\left(\dfrac{2k\pi}{n}\right)\right)\right)
			= \left(\frac{1}{2\pi}\int_{0}^{2\pi}E_{\alpha}\left(2\gamma\cos\vartheta\right)d\vartheta\right),
		\end{equation}
		where
		\begin{equation}
			\left(\frac{1}{2\pi}\int_{0}^{2\pi}E_{\alpha}\left(2\gamma\cos\vartheta\right)d\vartheta\right)
			= \left(\frac{1}{\pi}\int_{0}^{\pi}E_{\alpha}\left(2\gamma\cos\vartheta\right)d\vartheta\right)
			= \mathcal{E}_{\alpha,0}\left(2\gamma\right).
		\end{equation}
	\end{proof}
	We cannot apply the same approximation as in the proof of (\ref{eq:cycle_ML}) to the numerator of $K_\alpha^\gamma$, because approximating 
	% $\sum_{k=1}^{n}E_{\alpha}\left(2\gamma\cos\left(\dfrac{2k\pi}{n}+\dfrac{\pi}{n}\right)\right)$
	the numerator to
	% $\sum_{k=1}^{n}E_{\alpha}\left(2\gamma\cos\left(\dfrac{\left(2k\pi\right)}{n}\right)\right)$
	the denominator for very large $n$ largely depends on the values of $\alpha$. 
	For instance, for $\alpha=1$ when $n=10$ the difference between the two terms is of the order of $10^{-6}$ and it drops to $10^{-15}$ for $n=20.$ However, for $\alpha=0.25$, this difference is of the order of $10^{7}$ for $n=10$ and remains of the order of $10^{6}$ for $n=20,$ and of $10^{3}$ for $n=40.$ 
	Therefore, because 
	% $\sum_{k=1}^{n}E_{\alpha}\left(2\gamma\cos\left(\dfrac{\left(2k\pi\right)}{n}\right)\right)$
	the denominator can be approximated by $\mathcal{E}_{\alpha,0}\left(2\gamma\right)$,
	we have that for $\alpha=1$, the balance index approaches $1$ for relatively small unbalanced cycles, while this is far from being the case for
	$\alpha=0.25$. This is visually clear when we examine the change of $K_{\alpha} \left(C_{n}\right)$
	as a function of both $n$ and $\alpha$ (note that $\gamma = \varGamma\left(\alpha+1\right)$ throughout the paper so we ignore the superscript); see Fig.~\ref{cycles} (left). Specifically, for values of $\alpha$ close to $1$, the values of $K_{\alpha}\left(C_{n}\right)$ are close to $1$ for almost all cycles with size $n\geq 10$. At the other
	extreme when $\alpha$ is close to $0$, the values of the balance index are extremely low for almost every cycle with $n \leq20$. 
	
	\begin{figure}[htbp]
		\centering
		\begin{tabular}{cc}
			\includegraphics[height=0.25\textheight]{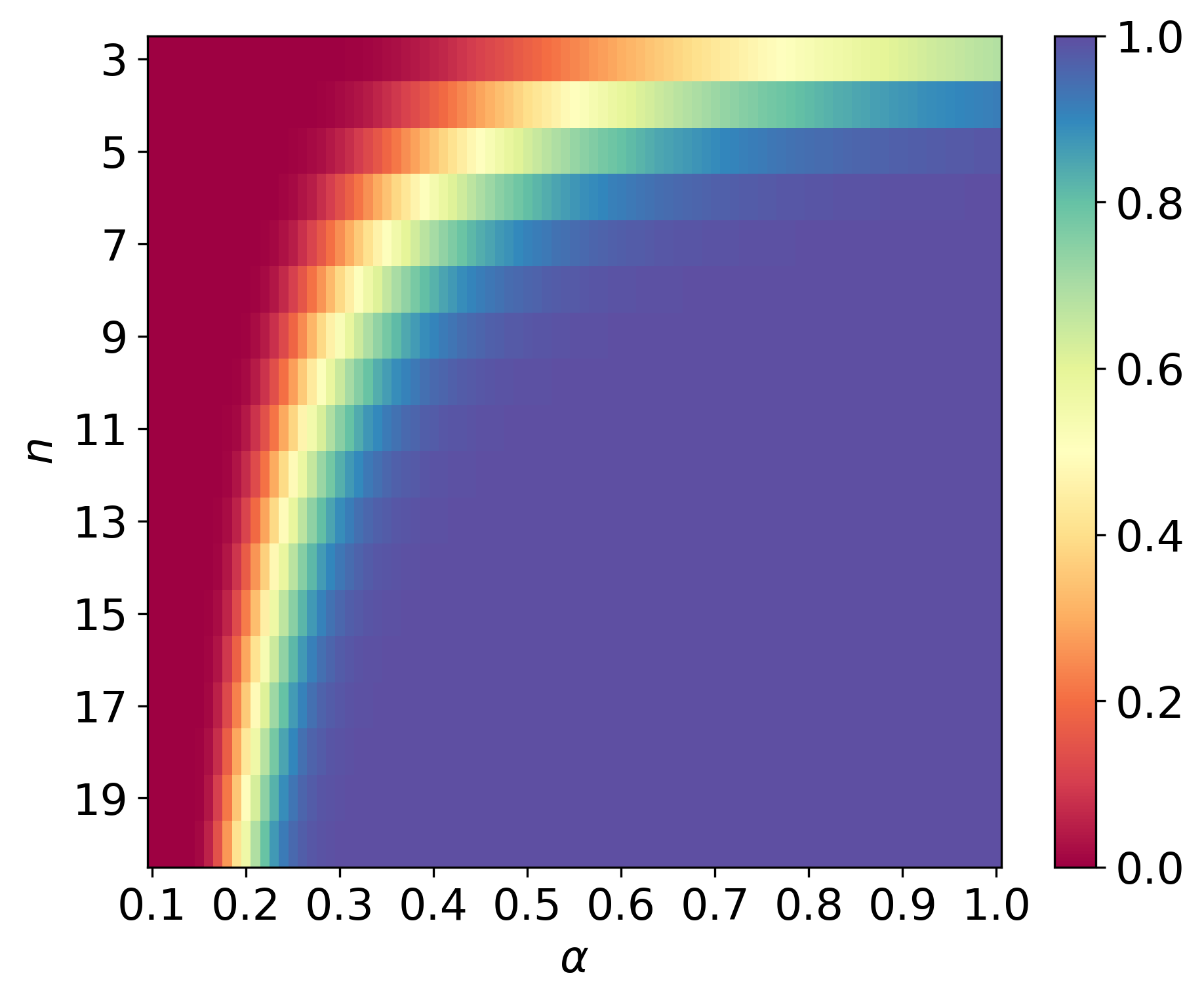} & \includegraphics[height=0.24\textheight]{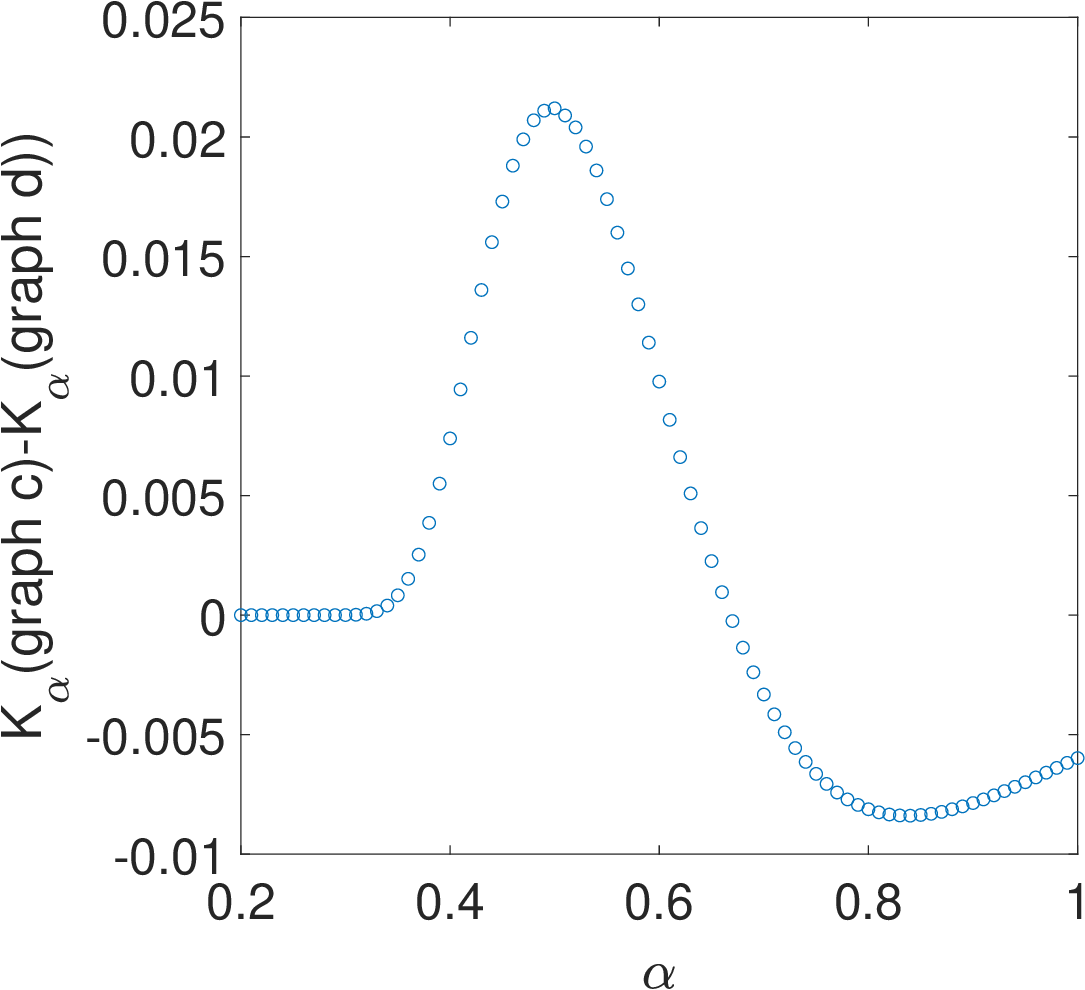}
		\end{tabular}
		\caption{(left) Contour plot of the values of $K_{\alpha}\left(C_{n}\right)$ for unbalanced cycles with $n$ vertices for values of $0.1\protect\leq\alpha\protect\leq1.$ (right) Difference of the balance index $K_\alpha$ for graph c) versus that for graph d) in Fig.~\ref{Petersen graphs} for values of $0.2\leq \alpha \leq 1$.}
		
		\label{cycles}
	\end{figure}

	\subsection{Properties of the balance index with memory}
	We start with the range of the balance index we have proposed. 
	\begin{theorem}
		The index $K_{\alpha}\left(G\right)$ is bounded as 
		\begin{equation}
			0\leq K_{\alpha}\left(G\right)\leq1,
		\end{equation}
		where the upper bound is reached if and only if the signed graph $G$ is balanced. 
	\end{theorem}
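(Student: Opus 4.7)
The plan is to work from the signed/unsigned closed-walk decomposition in (\ref{eq:Ka-pn}),
\[ K_\alpha(G) = \frac{P - N}{P + N}, \qquad P \coloneqq \mathrm{Tr}\bigl(E_\alpha^+(A)\bigr), \quad N \coloneqq \mathrm{Tr}\bigl(E_\alpha^-(A)\bigr), \]
both of which are non-negative because $1/\Gamma(\alpha k + 1) > 0$ for $0 < \alpha \le 1$ and the signed walk counts $\mu_k^\pm(i,i)$ are non-negative. The trivial length-$0$ closed walk at each vertex contributes $n$ to $P$, so $P \ge n > 0$ and the denominator $P+N$ is strictly positive; the ratio is therefore well defined.

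The upper bound $K_\alpha(G) \le 1$ is immediate from $N \ge 0$. For the lower bound $K_\alpha(G) \ge 0$, I would reduce to showing $\mathrm{Tr}(E_\alpha(A)) \ge 0$, diagonalise the real symmetric $A$, and write $\mathrm{Tr}(E_\alpha(A)) = \sum_j E_\alpha(\lambda_j)$ with $\lambda_j \in \mathbb{R}$. The step I expect to be the main technical hurdle is invoking the scalar positivity $E_\alpha(x) > 0$ for every $x \in \mathbb{R}$ when $0 < \alpha \le 1$: the half-line $x \ge 0$ is obvious from the defining power series, whereas $x < 0$ rests on Pollard's complete-monotonicity theorem for $E_\alpha(-\,\cdot\,)$ (equivalently, on its Bernstein representation via the Mainardi function). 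Once this is granted, every summand is strictly positive, so $\mathrm{Tr}(E_\alpha(A)) > 0$ and hence $K_\alpha(G) > 0$.

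For the equality characterisation, I would argue both directions. If $G$ is balanced, Theorem~\ref{the:lit_signed-bal} supplies a bipartition, hence a diagonal signature $S$ with $S^2 = I$ such that $A = S |A| S$. Cyclicity of the trace gives $\mathrm{Tr}(A^k) = \mathrm{Tr}(S |A|^k S) = \mathrm{Tr}(|A|^k)$ for every $k \ge 0$; summing the Mittag-Leffler series term-by-term yields $\mathrm{Tr}(E_\alpha(A)) = \mathrm{Tr}(E_\alpha(|A|))$, and so $K_\alpha(G) = 1$. Conversely, suppose $K_\alpha(G) = 1$, equivalently $N = 0$. Interpreting $\mathrm{Tr}(A^k)$ and $\mathrm{Tr}(|A|^k)$ combinatorially gives the coefficient-wise bound $\mathrm{Tr}(A^k) \le \mathrm{Tr}(|A|^k)$, since each closed walk of length $k$ contributes $\pm 1$ to the signed trace and $+1$ to the unsigned one. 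Strict positivity of the weights $1/\Gamma(\alpha k + 1)$ together with equality of the two weighted series then forces $\mathrm{Tr}(A^k) = \mathrm{Tr}(|A|^k)$ for every $k$, i.e., every closed walk of $G$ is positive. A single traversal of any negative cycle would be a negative closed walk, so $G$ has no negative cycle and is therefore balanced.
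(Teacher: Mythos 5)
Your proof is correct and follows essentially the same route as the paper: the $(P-N)/(P+N)$ decomposition for the upper bound, Pollard's complete monotonicity of $E_\alpha(-x)$ applied eigenvalue-by-eigenvalue for the lower bound, and the observation that $K_\alpha=1$ forces the negative closed-walk contribution to vanish, hence no negative cycles. Your treatment of the equality case is slightly more explicit than the paper's (you give the balanced $\Rightarrow$ $K_\alpha=1$ direction via a signature similarity $A=S|A|S$, where the paper runs a single chain of equivalences through the walk counts), but this is a matter of presentation rather than a different argument.
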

	\begin{proof}
		It is clear from Eq.~\eqref{eq:Ka-pn} that $K_\alpha(G) \le 1$. We now examine the lower bound.
		Let $\left\{ \lambda_{j}\left(A\right)\right\} =\left\{ \lambda_{j}^{+}\left(A\right)\right\} \cup\left\{ \lambda_{j}^{-}\left(A\right)\right\}$ be the eigenvalues of $A$, where $\left\{ \lambda_{j}^{+}\left(A\right)\right\}$ and $\left\{ \lambda_{j}^{-}\left(A\right)\right\}$ are the sets of nonnegative and negative eigenvalues of $A$, respectively. 
		Clearly, $E_{\alpha}\left(\lambda_{j}^{+}\left(A\right)\right)\geq 0$.
		We then consider negative eigenvalues $\lambda_{j}^{-}\left(A\right)$, and $E_{\alpha}\left(-\left|\lambda_{j}^{-}\left(A\right)\right|\right)$. 
		We note that
		\begin{equation}
			\left(-1\right)^{k}\dfrac{d^{k}E_{\alpha}\left(-x\right)}{dx^{k}}\geq0,
		\end{equation}
		for all $x$ and for $0\leq\alpha\leq1$ \cite{polard1948completely}. Hence, $E_{\alpha}\left(-\left|\lambda_{j}^{-}\left(A\right)\right|\right)\geq0$
		and
		\begin{equation}
			Tr\left(E_{\alpha}\left(A\right)\right)=\sum_{j=1}^{n}E_{\alpha}\left(\lambda_{j}\left(A\right)\right)\geq0.
		\end{equation}
		% Let us write $Tr\left(E_{\alpha}\left(A\right)\right)=\sum_{k=0}^{\infty}\dfrac{\mu_{k,\alpha}^{+}-\mu_{k,\alpha}^{-}}{\varGamma\left(\alpha k+1\right)}$ where $\mu_{k,\alpha}^{+}$ and $\mu_{k,\alpha}^{-}$ are the positive and negative contributions to $Tr\left(A^{k}\right),$ respectively. Then, it is evident that $Tr\left(E_{\alpha}\left(\left|A\right|\right)\right)=\sum_{k=0}^{\infty}\dfrac{\mu_{k,\alpha}^{+}+\mu_{k,\alpha}^{-}}{\varGamma\left(\alpha k+1\right)}\geq0$
		% and that $0\leq K_{\alpha}\left(G\right)$.
		We note that $\left|A\right|$ is a nonnegative matrix, and so is any power of $\left|A\right|$, thus $E_{\alpha}\left(\left|A\right|\right)$. Hence, $Tr\left(E_{\alpha}\left(\left|A\right|\right)\right) \ge 0$, and then $0\leq K_{\alpha}\left(G\right)$.
		
		% We know that $\lambda_{j}\left(A\left(G\right)\right)=\lambda_{j}\left(\left|A\left(G\right)\right|\right),\, \forall j$ if and only if $G$ is a balanced signed graph \cite{acharya1980spectral}. We also know that the spectral radius of $A\left(G\right)$ is no greater than that of $\left|A\left(G\right)\right|$. 
		% This implies that $Tr\left(E_{\alpha}\left(A\right)\right)=Tr\left(E_{\alpha}\left(\left|A\right|\right)\right)$, consequently $K_{\alpha}\left(G\right)=1$, if and only if the graph is balanced. 
		It is clear from Eq.~\eqref{eq:Ka-pn} that $K_\alpha = 1$ if and only if $Tr\left(E^-_{\alpha}\left(A\right)\right) = 0$, if and only if there is no negative closed walks of any length involving any vertices, if and only if there is no negative cycles, i.e., the graph is balanced. While for the lower bound, we require 
		$Tr\left(E^-_{\alpha}\left(A\right)\right) = Tr\left(E^+_{\alpha}\left(A\right)\right)$, % $\sum_{k=0}^{\infty}\mu_{k,\alpha}^{+}=\sum_{k=0}^{\infty}\mu_{k,\alpha}^{-}$, 
		which can only happen when the number of negative (unbalanced) closed walks is sufficiently large in a signed graph. 
	\end{proof}
	We now consider again the signed Petersen graphs, specifically the two labelled as c) and d) in Fig.~\ref{Petersen graphs}.
	We recall that although graph d) reaches consensus at a time significantly smaller than graph d), the first has a larger value of $K_{1}$, due to the heavy penalization to walks of relatively large sizes, imposed by the exponential (see section \ref{sec:motivation}).
	We now consider $K_{\alpha}$ as a function of $\alpha$, between these two graphs; see Fig.~\ref{cycles} (right). 
	Specifically, at $\alpha=1$, the graph d) is more balanced than graph c), corresponding to a negative value of the difference between $K_{\alpha}$ of c) minus that of d). This negative difference becomes larger when $\alpha$ drops from $1$, reaching a minimum at about $\alpha=0.84.$ However, after this point, the trend reverses towards positive values,
	reaching the maximum at around $\alpha=0.5.$ 
	At this value of $\alpha$, the penalization of longer cycles is not as heavy, since the larger number of negative hexagons in d) overcome the larger number of negative pentagons in c). 
	% The values of $K_{0.5}$ for the five signed Petersen graphs of Fig.~\ref{Petersen graphs} are: $0.3878$; $0.1973$; $0.1514$; $0.1302$; $0.0787$. 
	If we now correlate the values of $K_{0.5}$ versus $t_{c}$, the squared Pearson correlation coefficient has value $0.981$, which clearly contrasts with the one of $0.924$ for $K_1$, implying that $K_{0.5}$ provides a better indicator of balance in terms of the convergence of the diffusion ($K_{0.5}$ for the five signed Petersen graphs of Fig.~\ref{Petersen graphs}, from a) to e), are: $0.3878$; $0.1973$; $0.1514$; $0.1302$; $0.0787$).
	% obtained with the values of $K_{1}.$ \textcolor{red}{Here we should
		% understand why there is first a drop in the difference if we are decreasing
		% the penalization of longer cycles. Also we you can try to find the
		% numbers of cycles of lengths longer than 5, both positive and negative
		% in these two graphs as well as any other thing that helps the reader
		% to understand the message.}

	To gain more insights about the significance of the use of memory to account for balance in signed graphs, let us further explore the changes. When $\alpha$ drops from $1.0$ to about $0.8$, the balance index of the signed Petersen graph d) increases relative to that of graph c). This can be explained by the fact that these graphs are triangle and quadrilateral free, and the smallest cycle is of length five. As we drop initially the value of $\alpha$ from $1$, the contribution of $C_{5}^{-}$ increases, and because graph c) has more of these cycles than d), it is less balanced relative to d). 
	However, as we continue dropping $\alpha$, the contribution of $C_{6}^{-}$ growth significantly. In this case, graph d) overcomes graph c) in the number of $C_{6}^{-}$, which make c) more balanced than
	d) after some critical value and the difference reaches a maximum at around $\alpha=0.5$. 
	Below this value of
	the memory parameter $\alpha$, the longer cycles, namely $C_{8}^{-}$ and $C_{9}^{-}$,
	makes their contribution. 
	In this case, graph c) overcomes d) in the
	number of $C_{8}^{-}$, but d) contains a bit more $C_{9}^{-}$ than c); see Supplementary Material for details. The effect of these longer unbalanced cycles is a further decay of the balance index of both graphs for $\alpha<0.5$. 
	
	\section{Examples of applications}
	
	\subsection{Gene regulatory networks}
	We first consider the gene regulatory networks of \textit{Saccharomyses cerevisiae} (yeast) and of \textit{Bacillus subtilis}, previously studied as signed undirected graphs by Soranzo et al. \cite{soranzo2012decompositions}.
	We maintain the undirected versions of these networks, and consider only their giant connected components. The balance index at $\alpha=1$ indicates that the network of \textit{S. cerevisiae} is slightly more balanced than that of \textit{B. subtilis}, with $K_{1}\approx0.933$ versus $K_{1}\approx0.643$, respectively. 
	We note that the difference between the values of $K_{1}$ for both networks is smaller than $0.3$ and both are not far from $1$, which implies that there is no significant difference in their degree of balance and that they form relatively balanced systems. 
	However, if we allow for an increment of the memory in the system by dropping $\alpha$, then the results change significantly. As can be seen in Fig.~\ref{Altafini data} (left), the difference in balance between the two gene regulatory networks increases up to $0.9$ (of a maximum of $1.0$) when $\alpha$ drops from $1.0$ to $0.62$, where $K_{0.62}\approx0.928$ for \textit{S. cerevisiae} and $K_{0.62}\approx0.014$ for \textit{B. subtilis}. That is, while the gene regulatory network of yeast is highly balanced, the one of \textit{B. subtilis} is extremely unbalanced.
	
	\begin{figure}[h]
		\centering %
		\begin{tabular}{ccc}
			\includegraphics[width=0.3\textwidth]{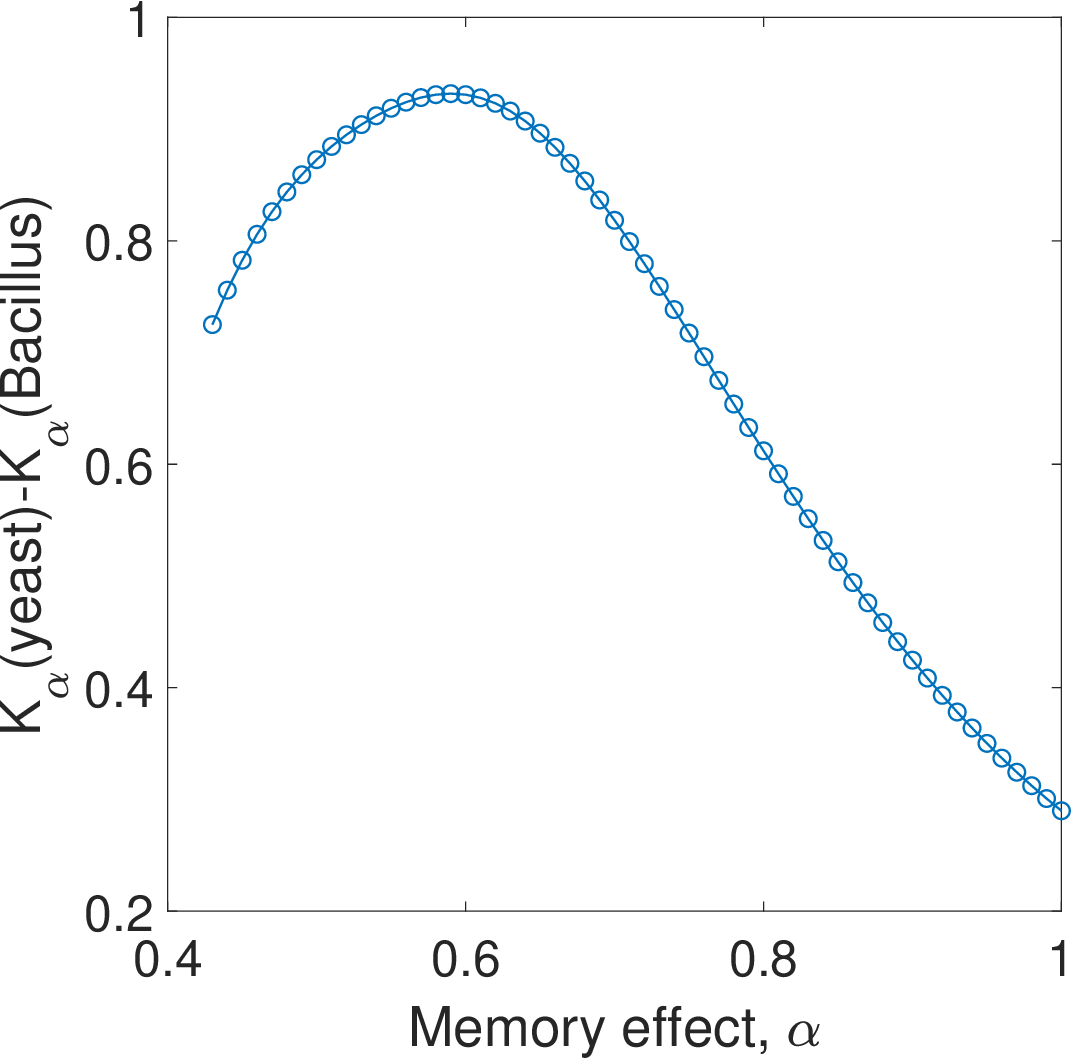}  & \includegraphics[width=0.3\textwidth]{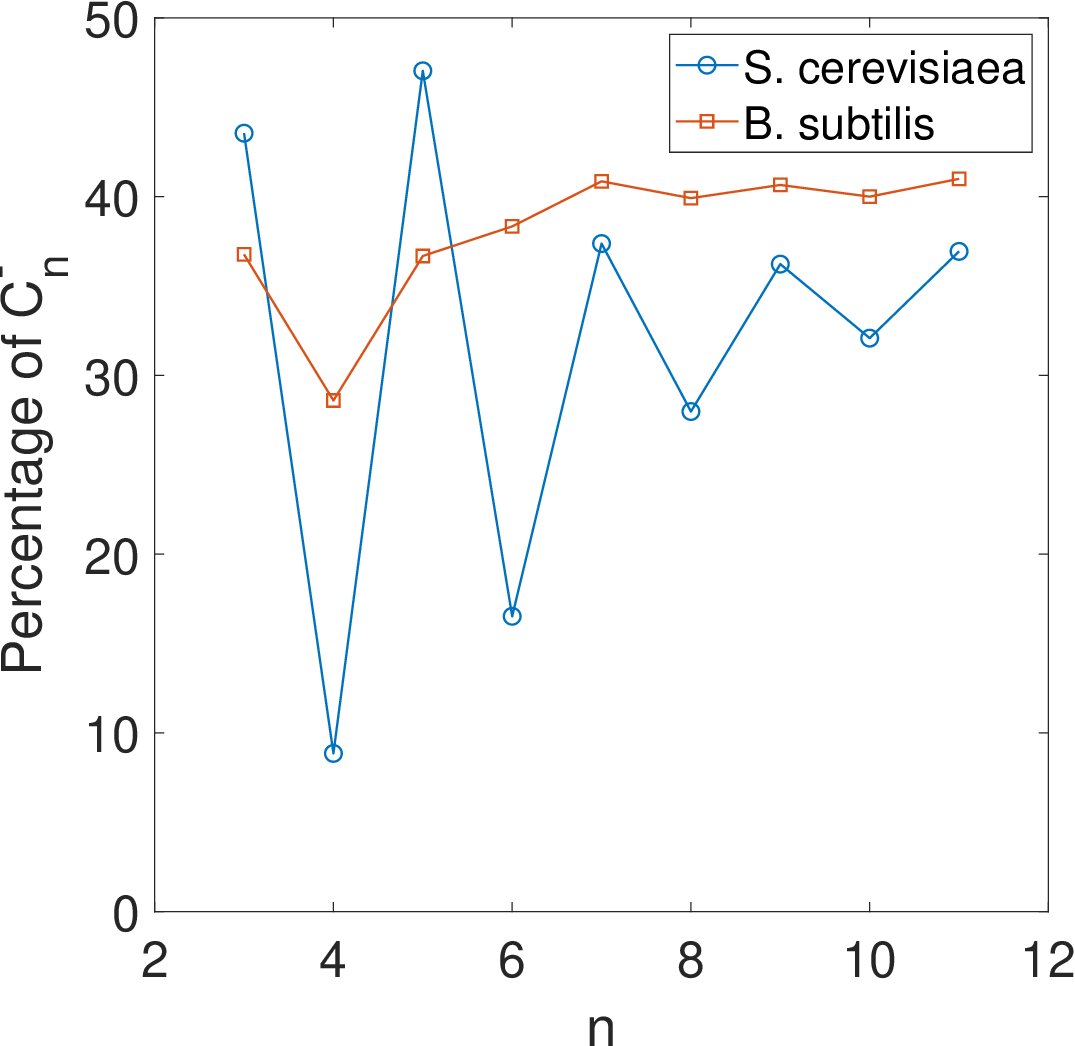}  & \includegraphics[width=0.3\textwidth]{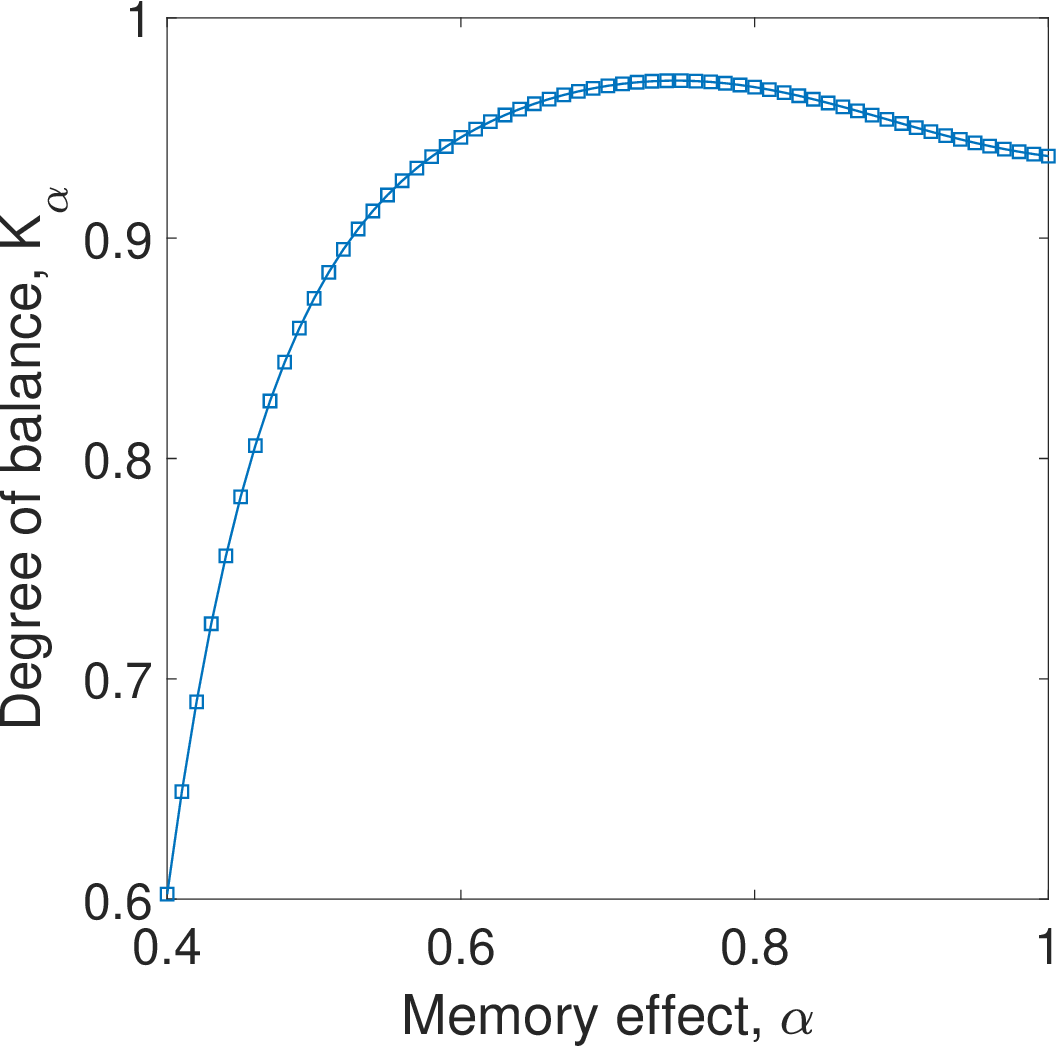} \tabularnewline
		\end{tabular}
		\caption{(left) Plot of the difference in balance $K_{\alpha}$ as a function of $\alpha$ for the gene regulatory networks of yeast and \textit{B. subtilis}. 
			(middle) Plot of the percentages of negative cycles $C_{n}^{-}$ for $3\protect\leq n\protect\leq11$ for the gene regulatory networks of yeast and \textit{B. subtilis}. The percentages are calculated as $100\cdot C_{n}^{-}/\left(C_{n}^{+}+C_{n}^{-}\right).$ 
			(right) Plot of the nonmonotonic change of $K_{\alpha}$ as a function of $\alpha$ for the gene regulatory network of yeast.}
		\label{Altafini data} 
	\end{figure}
	
	This difference is mainly due to the fact that the network of \textit{B. subtilis} has a large number of relatively large unbalanced cycles. 
	% In \yu{Supplementary Material}, we compile the number of positive $C_{n}^{+}$ and negative $C_{n}^{-}$ cycles of length $n$ which are found in the regulatory networks of yeast and of \textit{B. subtilis}. 
	We observe that although $C_{n}^{-}$ grows exponentially fast in both networks, it grows faster for the network of \textit{B. subtilis}; see Supplementary Material for details. This can be implied from Fig.~\ref{Altafini data} (middle) where we visualise the percentages of negative cycles of increasing lengths. It can be seen that the network of yeast has relatively more unbalanced triangles and pentagons but significantly less percentage of negative squares than the one of \textit{B. subtilis}.
	This explains why both networks have comparable values of the balance index when $\alpha$ is close to one, i.e., the memory of the system is relatively low although the one of yeast is slightly more balanced than the one of \textit{B. subtilis}. 
	However, when cycles of longer
	lengths ($n\ge 6$) are taken into account, the gene regulatory network of \textit{B. subtilis} has systematically more percentage of unbalanced cycles than the one of yeast. This clearly explains why the network of \textit{B. subtilis} is significantly less balanced than the one of yeast for relatively
	low values of $\alpha$, i.e., the memory of the system increases.
	
	Regarding the change of $K_{\alpha}$ with respect to the drop of the memory parameter $\alpha$ in signed graphs, another interesting characteristic is the possibility of nonmonotonicity; see the case of the gene regulatory network of yeast in Fig.~\ref{Altafini data} (right). 
	Specifically, for the network of yeast, $K_{\alpha}$ increases when $\alpha$ drops from $1.0$ to about $0.75$, and then decays very quickly for values $\alpha<0.75$. 
	In practical terms, this means that there is an ``optimal'' value of the memory that maximizes the degree of balance of this network, and that such value is different from $\alpha=1$.
	The structural explanation for this nonmonotonicity can also be found in the plot in Fig.~\ref{Altafini data} (middle). We observe that there is a significant drop in the percentage of negative squares in this network, which contributes to increasing balance when we drop $\alpha$ from $1.0$ to about $0.75$. However, as value of $\alpha$ decays beyond $0.75$, the longer negative cycles become more important, and the global balance of the network quickly decays.
	
	\subsection{Spatial ecological networks}
	We now study a series of $31$ signed networks representing patterns of spatial (co)occurrence of plants in four major locations in Spain, specifically, Cabo de Gata-Nijar National Park (36.77N, --2.11W), Monegros (41.65N, --0.71W), Sierra de Guara (42.27N, 0.18W) and Ordesa-Monte Perdido National Park (42.63N, --0.11W). The vertices of these networks represent plant species and two vertices form an edge in the graph if the corresponding plants has a spatial association, which was calculated by comparing the number of times that the two species appeared at the same point on the transects. 
	Two plant species share a positive edge if they appeared associated in close region of space, while negative associations correspond to plants appearing separated at a significant distance in space \cite{saiz2017evidence}.
	Therefore, patterns of signed cycles appear in these networks. The meaning of these patterns is self-explained, where a fully positive triangle, for instance, indicates that the three plants have certain type of cooperative relations that allow them to coexist in the same spatial region. A fully negative triangle indicates competitive interactions between the three species that avoids their coexistence in the same location.
	% \begin{figure}[h]
		% 	\centering \includegraphics[width=0.98\textwidth]{Ecological_scheme}
		% 	\caption{Schematic illustration adapted from by Saiz et al. \cite{saiz2017evidence}
			% 		on the types of interactions (signed edges) between plants (vertices)
			% 		in an ecological setting. Two plants are considered to be connected
			% 		on the basis of the the number of times that the two species appeared
			% 		at the same point on the transects. If the plants appear frequently
			% 		together in a close spatial region they are considered to have a positive
			% 		interaction, while if they are frequently distant, they are assigned
			% 		a negative interaction.}
		% 	\label{construction of networks} 
		% \end{figure}
	
	We give the average values of $K_{\alpha=1}$ and $K_{\alpha=0.8}$ for the networks in each of the four major locations in Table \ref{data ecological}.
	% We have calculated the balance indices $K_{\alpha}$ for $0.4\le \alpha \le 1$ for the $31$ signed ecological networks, where the calculations of the Mittag-Leffler of the adjacency matrices of some networks for $\alpha<0.4$ blow up for which we do not consider them for the total dataset. 
	We also reproduce the values of the mean temperature and precipitation of those regions as reported by Saiz et al.~\cite{saiz2017evidence}.
	The results for $\left\langle K_{\alpha=1}\right\rangle $ are qualitatively similar to those in \cite{saiz2017evidence}, where an index of balance $R$ based on triangles only was used. 
	These results lead to the fact that the balance in those places of higher temperature and lower precipitation is bigger than in those where the temperature is low and the precipitation is high. Both $\left\langle K_{\alpha=1}\right\rangle $ and $R$ identify Monegros as the site with the largest balance and Sierra de Guara as the one more out of balance. 
	However, when we increase the memory of the system by considering a lower value of $\alpha$, e.g., $\left\langle K_{\alpha=0.8}\right\rangle $, a swap on the values of balance of Cabo de Gata and Monegros appears;see Table \ref{data ecological}.
	
	\begin{table}[h]
		\begin{centering}
			\resizebox{\textwidth}{!}{
				\begin{tabular}{|c|c|c|c|c|}
					\hline 
					location  & Temp. ($^{\circ}C$)  & Prec. (mm)  & $\left\langle K_{\alpha=1}\right\rangle $  & $\left\langle K_{\alpha=0.8}\right\rangle $\tabularnewline
					\hline 
					\hline 
					Cabo de Gata  & 24  & 328  & $0.335\pm0.265$  & $0.165\pm0.192$\tabularnewline
					\hline 
					Monegros  & 21  & 360  & $0.376\pm0.221$  & $0.129\pm0.130$\tabularnewline
					\hline 
					Sierra de Guara  & 17  & 927  & $0.0883\pm0.033$  & $0.0061\pm0.0042$\tabularnewline
					\hline 
					Ordesa-Monte Perdido  & 11  & 1485  & $0.0985\pm0.072$  & $0.014\pm0.016$\tabularnewline
					\hline 
				\end{tabular}
			}
			\par\end{centering}
		\caption{Values of the balance degree indices $\left\langle K_{\alpha=1}\right\rangle$ and $\left\langle K_{\alpha=0.8}\right\rangle $ averaged for all the networks in the four main geographic locations studied here. The values of the mean temperature and precipitation in those regions are reported as in Saiz et al. \cite{sadeghi2018some}.}
		\label{data ecological} 
	\end{table}
	
	We visualise the average and standard deviations of $K_{\alpha}$ for $0.4\leq\alpha\leq1$ for the $31$ signed ecological networks grouped in the four major sites under study in Fig.~\ref{plots_ecological} (left). The crossing between the balance rankings of Cabo de Gata and Monegros occurs around $\alpha=0.8$. 
	Furthermore, if we try to explain the degree of balance of these sites by considering a single parameter like the precipitation -- while noticing the risks of doing any correlation for only four points -- we observe some interesting patterns. A power-law fitting of the type: $\left\langle K_{\alpha=1}\right\rangle \sim P^{-2.116}$ gives a correlation coefficient of $r\approx0.88$. Similarly, $R\sim P^{-1.428}$ with $r\approx0.89$. 
	However, when memory effects take place, we obtain: $\left\langle K_{\alpha=0.8}\right\rangle \sim P^{-3.156}$ with $r\approx0.94$. That is, the memory effects increase the amount of variance in the index explained from $77\%$ with $\left\langle K_{\alpha=1}\right\rangle$ to $88\%$ with $\left\langle K_{\alpha=0.8}\right\rangle $.
	
	\begin{figure}[h]
		\centering %
		\begin{tabular}{cc}
			\includegraphics[height=0.23\textheight]{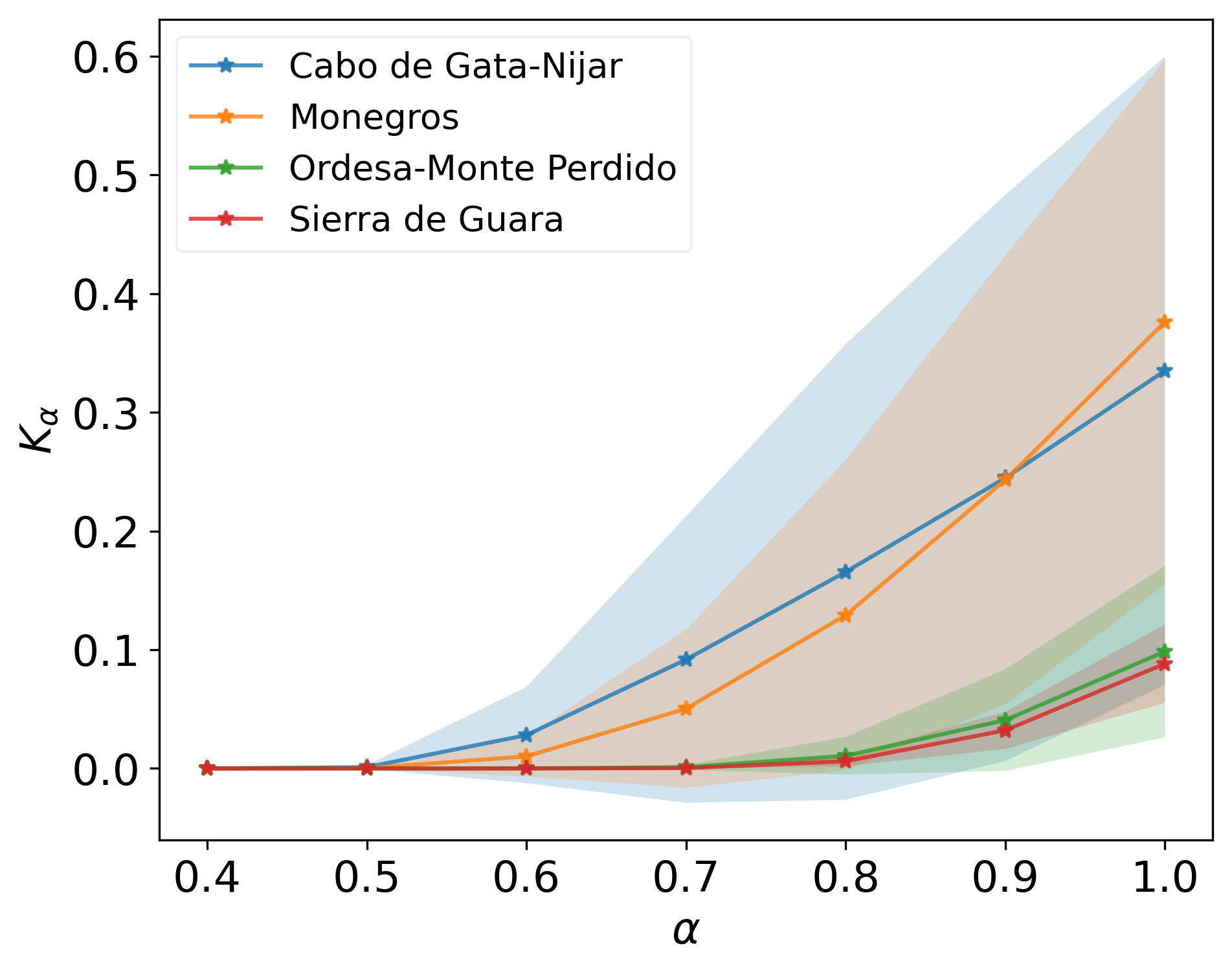}  & \includegraphics[height=0.23\textheight]{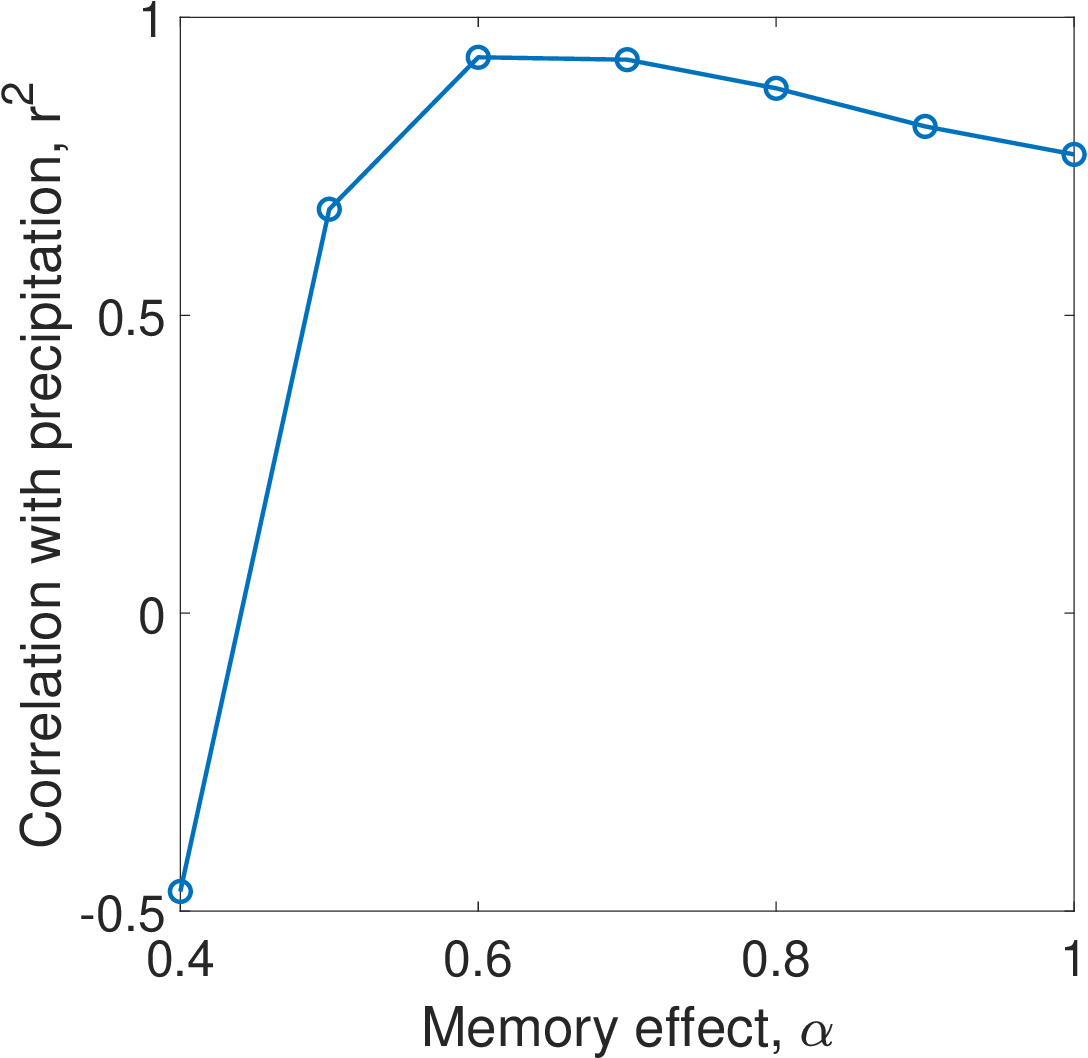} \tabularnewline
		\end{tabular}% \subfloat[]{\includegraphics[width=0.5\textwidth]{parks-Ka}}
		% \subfloat[]{\includegraphics[width=0.4\textwidth]{Correlation_ecological}}
		\caption{(left) Change of the average balance $K_{\alpha}$ for $0.4\protect\leq\alpha\protect\leq1$ of the signed ecological networks in the four major Spanish locations studied. (right) Plot of the change of the Pearson correlation coefficient $r^{2}$ of the balance indices with memory $K_{\alpha}$ for $0.4\protect\leq\alpha\protect\leq1$ versus the amount of precipitation in mm on the four major locations studied in this work.}
		\label{plots_ecological} 
	\end{figure}
	
	The question of the existence of an optimal value for the memory effect remains. We obtain the power-law correlation between the balance indices $K_{\alpha}$ for $0.4\leq\alpha\leq1$ and the mean precipitation in the corresponding main locations. The correlation coefficient increases when $\alpha$ drops from $1$ up to $0.6$, and then it decays very quickly; see Fig.~\ref{plots_ecological} (right). This implies that memory effects in ecological systems may have an optimum. However, more research in this area is needed to obtain more conclusive insights about this important question, and we leave it to future work.
	
	\subsection{Social networks in rural villages}
	Finally, we consider a set of social networks constructed from the data of $24696$ people aged $12$ to $93$ years in geographically isolated villages in western Honduras \cite{isakov2019structure}. The vertices of these networks represent residents within each village, and they are connected by a positive (negative) edge if either of them identify the other as a friend (an enemy), while if one identify the other as a friend while the other identify the one as an enemy, we connect them by a negative edge. 
	We note that the case that they have no opinion of each other is also allowed. By design, the networks are solely within-village networks, and we select $11$ of them for our analysis, labelled as $A$ up to $K$. Cycles of various lengths can frequently occur in such social networks, positive or negative. Corresponding to balance theory \cite{cartwright1956structural}, positive cycles indicate that the residents can be partitioned into one or two communities without conflicts inside each community, while negative cycles indicate the existence of conflicts of the relationships between residents.
	
	\begin{figure}[htbp]
		\centering \includegraphics[height=0.26\textheight]{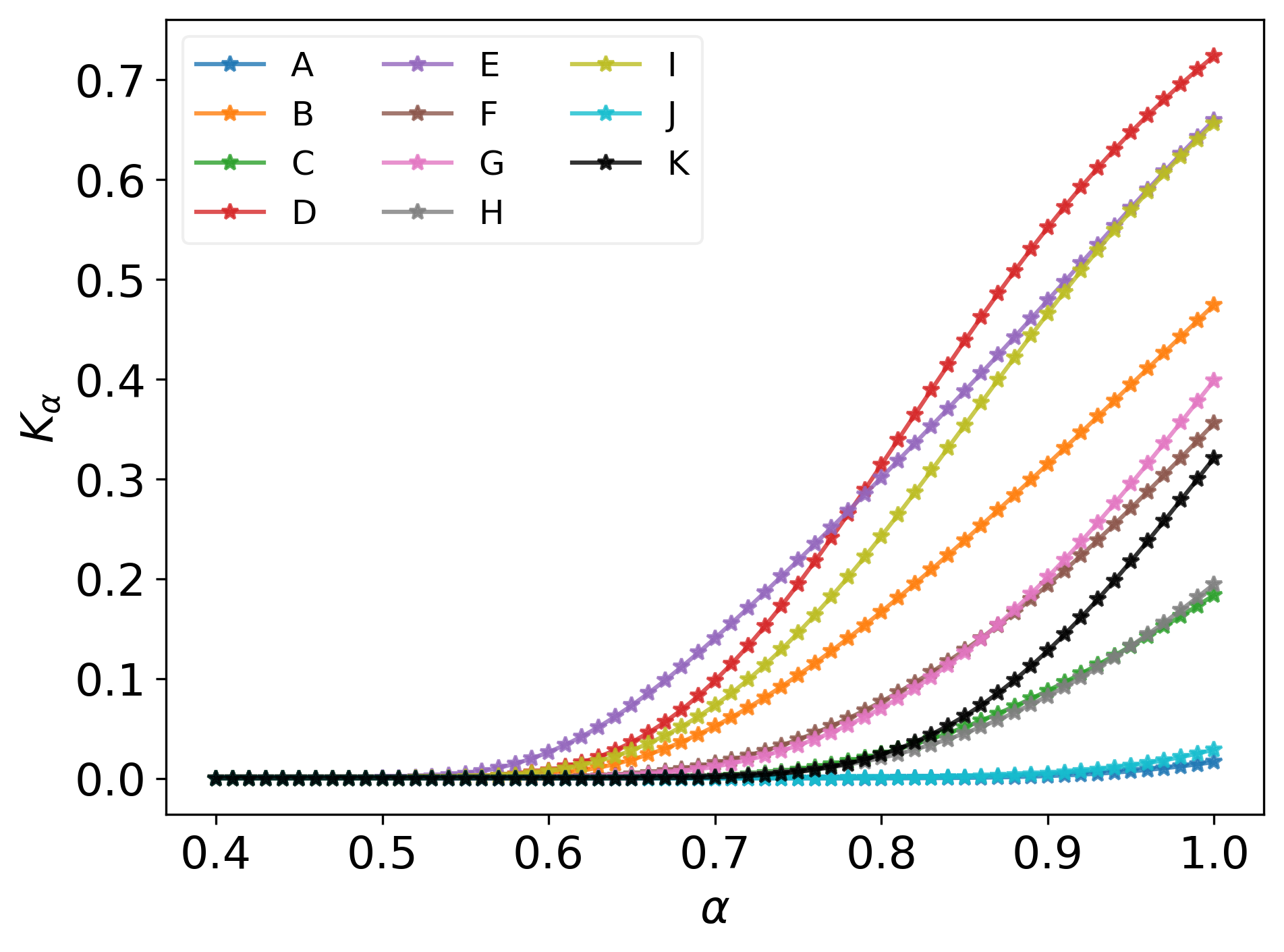} \caption{Plot of the balance index $K_{\alpha}$ as a function for $\alpha$ for the $11$ social networks of rural villages.}
		\label{fig:ruralV-Ka} 
	\end{figure}
	
	We consider the balance index $K_{\alpha}$ for $0.4\le\alpha\le1$ for the $11$ signed social networks; see Fig.~\ref{fig:ruralV-Ka}.
	We find that all networks are not completely balanced: all $11$ networks reaches the maximum value of the balance index $K_{\alpha}$ at $\alpha=1$, and $K_{\alpha}$ quickly decreases as $\alpha$ deviates from $1$, where the values are almost $0$ for all networks at $\alpha=0.5$.
	For example, village D has the maximum index value in all $11$ networks at $\alpha=1$, with $K_{1}\approx0.723$, which implies that the network is close to being balanced. However, it becomes less than $0.5$ at $\alpha=0.8$, and continue decreasing as we increase the memory effect through parameter $\alpha$. These imply the abundance of long negative cycles in these social networks, which is consistent with the results in \cite{isakov2019structure}, such as the homophily of negative relationships.
	
	\begin{figure}[h]
		\centering %
		\begin{tabular}{cc}
			\includegraphics[height=0.23\textheight]{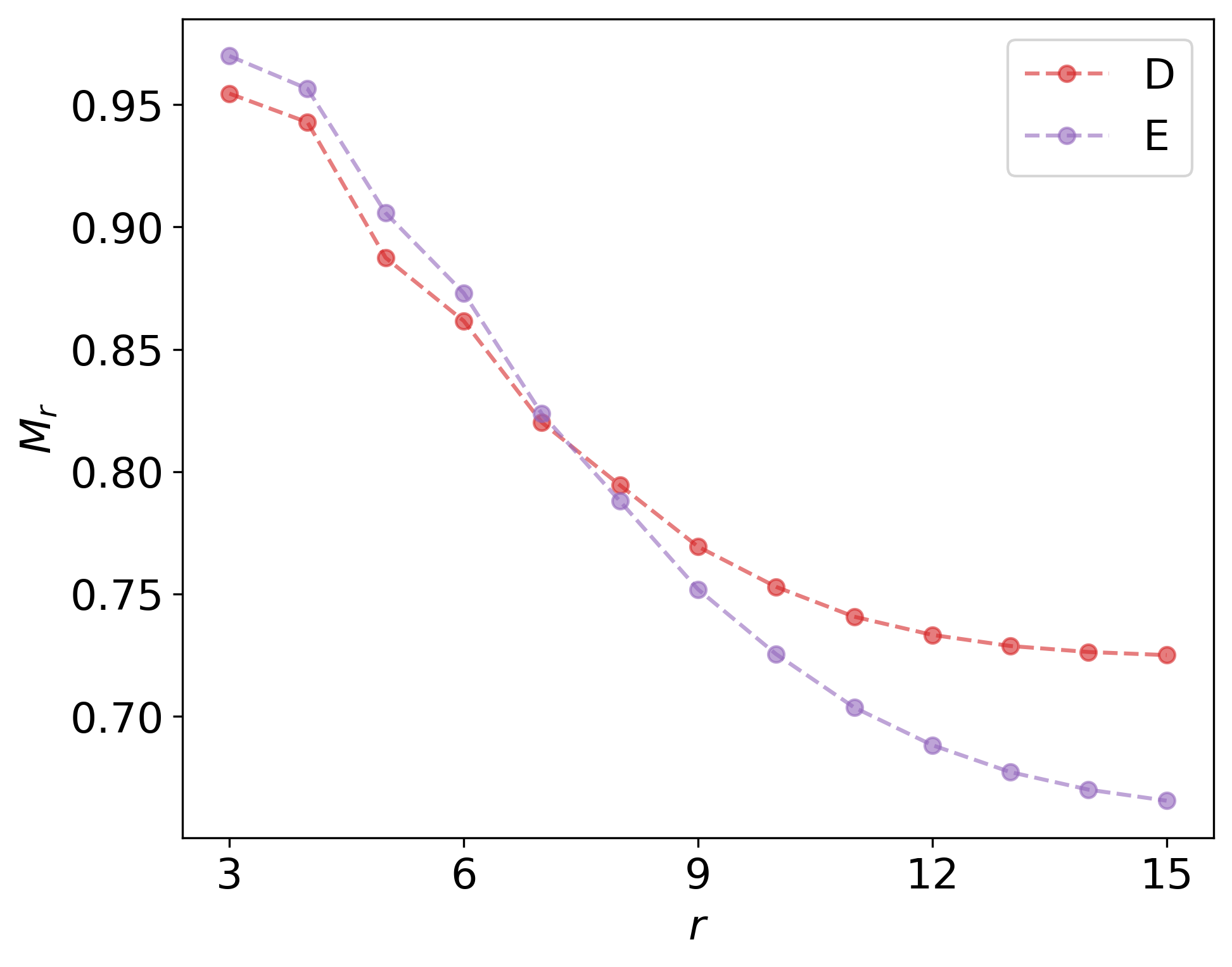}  & \includegraphics[height=0.23\textheight]{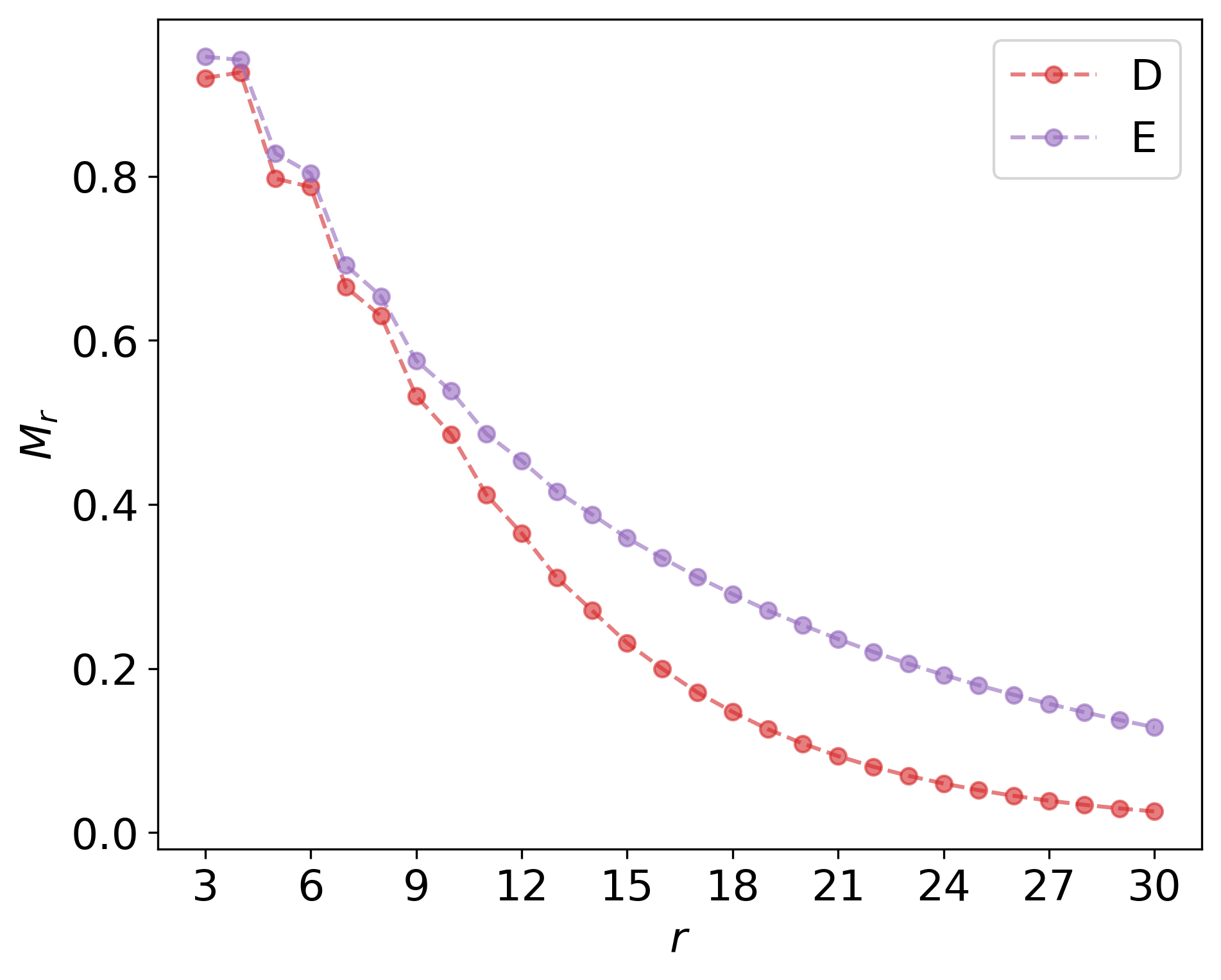} \tabularnewline
		\end{tabular}\caption{Plot of the truncated index $M_{r}$ as a function for $r$ for the
			social networks of villages $D$ and $E$ when $\alpha=1$ (left)
			and $\alpha=0.6$ (right)}
		\label{fig:ruralV-moments} 
	\end{figure}
	
	Specifically, we observe a clear crossing between the change of the index values of village D and that of village E as $\alpha$ deviates from $1$. In order to understand the differences between the balance
	of the villages $D$ and $E$, we start by defining the following truncated series, $M_{r}^{s}\coloneqq\sum_{k=0}^{r}Tr\left(A^{k}\right)/\varGamma\left(\alpha k+1\right)$
	% \begin{equation}
		% 	M_{r}^{s}\coloneqq\sum_{k=0}^{r}\dfrac{Tr\left(A^{k}\right)}{\varGamma\left(\alpha k+1\right)},
		% \end{equation}
	which accounts for the signed contributions of the different spectral
	moments; $M_{r}^{u}\coloneqq\sum_{k=0}^{r}Tr\left(\left|A\right|^{k}\right)/\varGamma\left(\alpha k+1\right)$ 
	% \begin{equation}
		% 	M_{r}^{t}\coloneqq\sum_{k=0}^{r}\dfrac{Tr\left(\left|A\right|^{k}\right)}{\varGamma\left(\alpha k+1\right)},
		% \end{equation}
	which accounts for the total contribution of closed walks, and $M_{r}=M_{r}^{s}/M_{r}^{u}$.
	Obviously, $M_{r\rightarrow\infty}=M_{r\rightarrow\infty}^{s}/M_{r\rightarrow\infty}^{t}=Tr\left(E_{\alpha}\left(A\right)\right)/Tr\left(E_{\alpha}\left(\left|A\right|\right)\right)$ recovers the balance index $K_{\alpha}$. 
	We start by truncating the series at $r=3$ which is where the first signed cycles appear, and then continue increasing $r$. First, we plot the results for the two graphs when $\alpha=1$ in the left of Fig.~\ref{fig:ruralV-moments}. Hence, the network of village $E$ (purple circles) appears to be more balanced than the one of village $D$ if we truncate the sum of spectral moments below $r=8$.
	Indeed, a simple index based only on triangles indicates that $E$ is more balanced than $D$. At about $r=8$ the cumulative sum of moments for graph $D$ become larger than that of graph $E$, indicating that now the former graph is more balanced. 
	The reason for this swap in the balance order is not directly caused by the larger number of longer signed cycles in one over the other, as we have seen in previous examples, but due to the fact that for graph $D$ the ratio of the cumulative sum of moments of length smaller than $8$ is smaller than that for the graph $E$. However, increasing this sum to higher-order moments makes it bigger for graph $D$ than to graph $E$. For example, the ratio $M_{7}\left(D\right)\approx2336/2850\approx0.8197$ %2335.957/2849.761\approx0.8197
	which is smaller than $M_{7}\left(E\right)\approx4442/5394\approx0.8236$. %4442.487/5393.65\approx0.8236
	However, $M_{8}\left(D\right)\approx(2336+239)/(2850+393)\approx0.7941$ %(2335.957+239.267)/(2849.761+393.208)\approx0.7941
	is smaller than $M_{8}\left(E\right)\approx(4442+815)/(5394+1278)\approx0.7880$, %(4442.487+814.847)/(5393.65+1278.4)\approx0.7880
	which is independent of the fact that $(239/393)<(815/1278)$,%(239.267/393.208)<(814.847/1278.4)
	but depending on the rates on which the numerator and denominator of the $M_{7}\left(D\right)$ and $M_{7}\left(E\right)$ growth by the addition of the individual terms.
	
	This effect previously seen for $\alpha=1$ disappears when we increases the memory effect; see the right of Fig.~\ref{fig:ruralV-moments}. Specifically, penalizing less the walks increases the difference in balance in favor of graph $E$ relative to $D$. Therefore, we may consider the fact that the factorial penalization points out to graph $D$ as more balance than $E$ as an artifact of this type of penalization, which indeed is solved when the memory of the system increases. This emphasize the importance of selecting an optimal memory effect parameter $\alpha$
	to understand the level of balance of the signed social networks.
	
	\section{A useful approximation}
	We know that
	\begin{equation}
		Tr\left(E_{\alpha}\left(\gamma A\right)\right)=\sum_{j=1}^{n}E_{\alpha}\left(\gamma\lambda_{j}\right).
	\end{equation}
	Let $\lambda_{1}^{\left(m_{1}\right)}>\lambda_{2}^{\left(m_{2}\right)}>\ldots>\lambda_{r}^{\left(m_{r}\right)}$
	be the distinct eigenvalues of $A$ together with their multiplicities $m_{i}$. For $\alpha$ relatively low, the function $E_{\alpha}\left(z\right)$ grows extremely fast with the values $z$. Therefore, for relatively small values of $\alpha$ the difference $\lambda_{1}>\lambda_{2}$ is magnified by $E_{\alpha}\left(\lambda_{1}\right)\ggg E_{\alpha}\left(\lambda_{2}\right),$ which implies that 
	\begin{equation}
		\underset{\alpha\rightarrow0}{\lim}Tr\left(E_{\alpha}\left(\gamma A\right)\right)=m_{1}E_{\alpha}\left(\gamma\lambda_{1}\right).
	\end{equation}
	If the eigenvalues of $\left|A\right|$ are $\mu_{1}>\mu_{2}\geq\ldots\geq\mu_{n}$
	we will have that
	\begin{equation}
		\tilde{K}_{\alpha}\coloneqq\underset{\alpha\rightarrow0}{\lim}K_{\alpha}\left(G_{s}\right)=\dfrac{m_{1}E_{\alpha}\left(\gamma\lambda_{1}\right)}{E_{\alpha}\left(\gamma\mu_{1}\right)}.
	\end{equation}
	
	We experimentally verify the goodness of this approximation in some of the networks we have studied previously, specifically the signed networks of the rural villages. In Fig.~\ref{approximation} (left), we plot the relative error in the balance index with memory $K_{\alpha}$ when approximated by $\tilde{K}_{\alpha}$ for the $11$ signed networks of the rural villages. We observe that for values of $\alpha$ close to $1$, the relative error is relatively large for most of the networks, with values of up to $0.7$ for village K. However, the error significantly drops when $\alpha$ changes systematically to $0$, and in particular, for $\alpha=0.5$ the average relative error for the $11$ networks is $0.071$ with only network D having a relatively large error of about $0.3$. When $\alpha=0.4$ all networks display error below $0.1$, where most of them have extremely low errors, e.g., below $10^{-10}.$ 
	\begin{figure}[ht]
		\centering
		\begin{tabular}{cc}
			\includegraphics[height=.23\textheight]{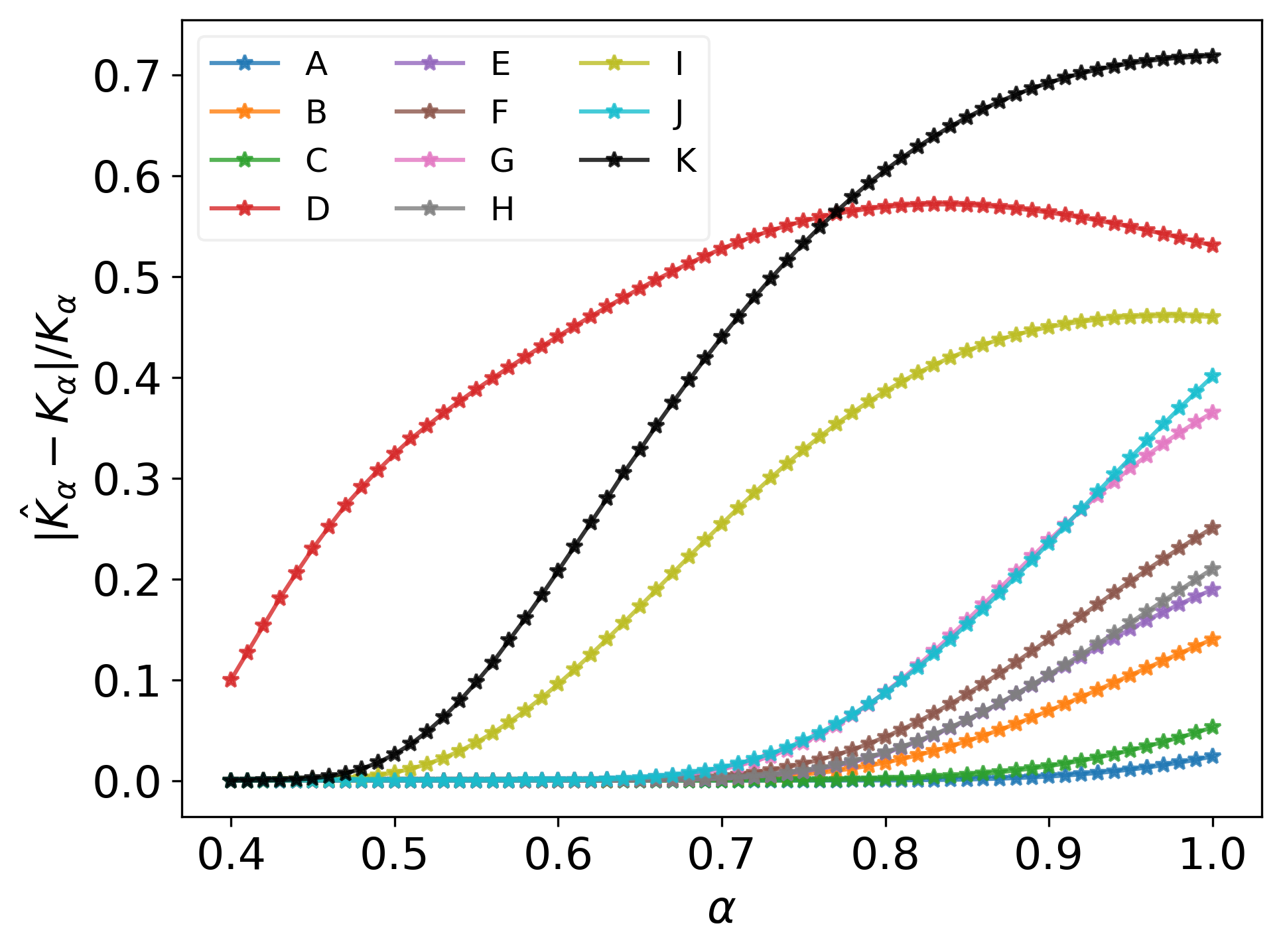} & \includegraphics[height=.23\textheight]{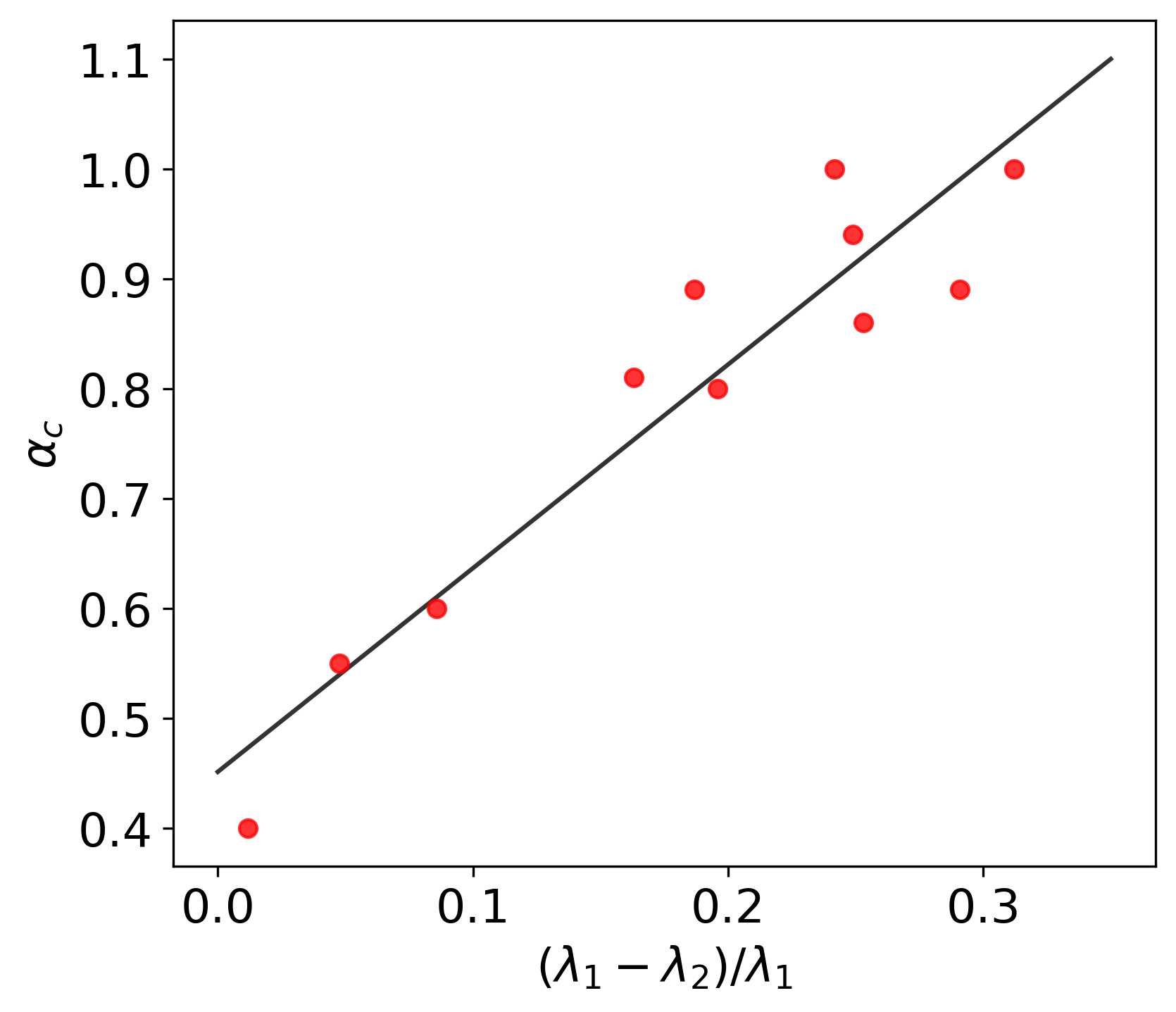}
		\end{tabular}
		\caption{(left) Plot of the relative error in the approximation of $K_{\alpha}$ by $\tilde{K}_{\alpha}$. (right) Plot of the values of $\alpha$ for which the relative error in the approximation drops below $0.1$, $\alpha_c$, versus the relative spectral gap, together with the fitted regression line.}
		\label{approximation}
	\end{figure}
	
	As in the derivation, the main driver for this approximation is the spectral gap of the adjacency matrix of the signed graph, i.e., $\lambda_{1}-\lambda_{2}$. Here we obtain the value of $\alpha$ for which the relative error in the approximation of $K_{\alpha}$ by $\tilde{K}_{\alpha}$ drops below $0.1$ (for illustrative purposes), denoted by $\alpha_{c}$. In Fig.~\ref{approximation} (right), we plot the values of $\alpha_{c}$ for every network as a function of the relative spectral gap $\left(\lambda_{1}-\lambda_{2}\right)/\lambda_{1}$.
	We observe that increasing the spectral gap makes the approximation works better even for relatively large values of $\alpha$ (Pearson correlation: $0.942$). In contrast, for those networks like the one of village D, where $\lambda_{1}\approx6.333$ and $\lambda_{2}\approx6.258$, it is hard to converge even for relatively low values of $\alpha$. However, the trend holds where the relative error of this approximation is significantly lower for values of $\alpha$ relatively low than that for the value of $\alpha=1$, where the balance index corresponds to the exponential.
	
	% \section{Conclusions}

	%\appendix
	%\section{An example appendix} 

	\section*{Acknowledgments}
	We would like to acknowledge Dr. H. Saiz and Prof. C. Altafini for sharing datasets used in this work. E.E.~acknowledges support from the Maria de Maeztu project CEX2021-001164-M funded by the MCIN/AEI/10.13039/501100011033. Y.T.~is funded by the Wallenberg Initiative on Networks and Quantum Information (WINQ).

	\bibliographystyle{plain}
	\bibliography{refs}
	
	\newpage
	\appendix
	\section{Tables}
	We give more details of the results from the signed Petersen graphs in Tables \ref{Results_Petersen} and \ref{Cycles Petersen}.  
	\begin{table}[htbp]
		\begin{centering}
			\begin{tabular}{|c|c|c|c|c|}
				\hline 
				graph  & $t_{c}$  & $C_{5}^{-}$  & $C_{6}^{-}$  & $K\left(G\right)$\tabularnewline
				\hline 
				\hline 
				a  & 48  & 4  & 4  & 0.968\tabularnewline
				\hline 
				b  & 24  & 6  & 6  & 0.951\tabularnewline
				\hline 
				c  & 22  & 8  & 4  & 0.941\tabularnewline
				\hline 
				d  & 14  & 6  & 10  & 0.947\tabularnewline
				\hline 
				e  & 11  & 12  & 0  & 0.919\tabularnewline
				\hline 
			\end{tabular}
			\par\end{centering}
		\caption{Values of the time of consensus $t_{c}$ using Altafini's model \cite{altafini2013biconsensus}
			for the graphs in Fig. \ref{Petersen graphs} as well as the values
			of the number of negative cycles of length 5 $C_{5}^{-}$ and 6 $C_{6}^{-}$
			in those graphs. The values of the balance degree index $K\left(G\right)$
			are also given for each graph. }
		
		\label{Results_Petersen} 
	\end{table}
	% \begin{figure}[h]
		% 	\begin{centering}
			% 		\includegraphics[width=0.65\textwidth]{Petersen_alpha} 
			% 		\par\end{centering}
		% 	\caption{}
		
		% 	\label{Petersen_alpha} 
		% \end{figure}
	
	\begin{table}[htbp]
		\begin{centering}
			\begin{tabular}{|c|c|c|c|c|}
				\hline 
				& \multicolumn{2}{c|}{\textit{graph c}} & \multicolumn{2}{c|}{\textit{graph d}}\tabularnewline
				\hline 
				$n$  & $C_{n}^{+}$  & $C_{n}^{-}$  & $C_{n}^{+}$  & $C_{n}^{-}$\tabularnewline
				\hline 
				\hline 
				5  & 4  & 8  & 6  & 6\tabularnewline
				\hline 
				6  & 6  & 4  & 0  & 10\tabularnewline
				\hline 
				8  & 7  & 8  & 15  & 0\tabularnewline
				\hline 
				9  & 12  & 8  & 10  & 10\tabularnewline
				\hline 
			\end{tabular}
			\par\end{centering}
		\caption{Values of the number of positive $C_{n}^{+}$ and negative $C_{n}^{-}$ cycles of length $5\protect\leq n\protect\leq8$ for the signed Petersen graphs c) and d) of Fig.~\ref{Petersen graphs}.}
		
		\label{Cycles Petersen} 
	\end{table}

	We summarise the exact numbers of positive and negative cycles in the gene regulatory networks in Table \ref{Cycles Altafini}.  
	\begin{table}[htbp]
		\begin{centering}
			\begin{tabular}{|c|c|c|c|c|}
				\hline 
				& \multicolumn{2}{c|}{\textit{S. cereviciae}} & \multicolumn{2}{c|}{\textit{B. subtillus}}\tabularnewline
				\hline 
				$n$  & $C_{n}^{+}$  & $C_{n}^{-}$  & $C_{n}^{+}$  & $C_{n}^{-}$\tabularnewline
				\hline 
				\hline 
				3  & 35  & 27  & 129  & 75\tabularnewline
				\hline 
				4  & 1174  & 114  & 1366  & 547\tabularnewline
				\hline 
				5  & 152  & 135  & 1780  & 1031\tabularnewline
				\hline 
				6  & 3855  & 763  & 8003  & 4975\tabularnewline
				\hline 
				7  & 1875  & 1119  & 20645  & 14261\tabularnewline
				\hline 
				8  & 34321  & 13332  & 72722  & 48309\tabularnewline
				\hline 
				9  & 29473  & 16740  & 179137  & 122709\tabularnewline
				\hline 
				10  & 271954  & 128469  & 547246  & 364806\tabularnewline
				\hline 
				11  & 476800  & 279258  & 1443443  & 1002857\tabularnewline
				\hline 
			\end{tabular}
			\par\end{centering}
		\caption{Values of the number of positive $C_{n}^{+}$ and negative $C_{n}^{-}$ cycles of length $3\protect\leq n\protect\leq11$ for the gene regulatory networks of yeast and \textit{B. subtilis}.}
		
		\label{Cycles Altafini} 
	\end{table}
	
\end{document}